\DeclareMathOperator{\E}{\mathbb{E}}
\newtheorem{lemma}{Lemma}
\theoremstyle{thmstyleone}%
\newtheorem{theorem}{Theorem}
\theoremstyle{thmstyletwo}%
\theoremstyle{thmstylethree}%
\newtheorem{definition}{Definition}%
\begin{document}
\title[{Quantum Approximation of Normalized Schatten Norms and its Applications}]{Quantum Approximation of Normalized Schatten Norms and Applications to Learning}

\author[1]{\fnm{Yiyou} \sur{Chen}}\email{gerry99@ucla.edu}
\author[2]{\fnm{Hideyuki} \sur{Miyahara}}\email{hmiyahara512@gmail.com}
\author[3,4,5,6,7]{\fnm{Louis-S.} \sur{Bouchard}}\email{louis.bouchard@gmail.com}
\author[2,4]{\fnm{Vwani} \sur{Roychowdhury}}\email{vwani@g.ucla.edu}
\affil[1]{\orgdiv{Department of Computer Science}, \orgname{University of California}, \orgaddress{ \street{404 Westwood Plaza}, \city{Los Angeles}, \postcode{90095}, \state{California}, \country{USA}}}

\affil[2]{\orgdiv{Department of Electrical and Computer Engineering}, \orgname{University of California}, \orgaddress{\street{420 Westwood Plaza}, \city{Los Angeles}, \postcode{90095}, \state{California}, \country{USA}}}

\affil[3]{\orgdiv{Department of Chemistry and Biochemistry}, \orgname{University of California}, \orgaddress{\street{607 Charles E Young Dr E}, \city{Los Angeles}, \postcode{90095}, \state{California}, \country{USA}}}

\affil[4]{\orgdiv{Center for Quantum Science and Engineering}, \orgname{University of California}, \orgaddress{\street{Los Angeles}, \city{Los Angeles}, \postcode{90095}, \state{California}, \country{USA}}}

\affil[5]{\orgdiv{California NanoSystems Institute}, \orgname{University of California}, \orgaddress{\street{570 Westwood Plaza building 114}, \city{Los Angeles}, \postcode{90095}, \state{California}, \country{USA}}}

\affil[6]{\orgdiv{Department of Bioengineering}, \orgname{University of California}, \orgaddress{\street{410 Westwood Plaza}, \city{Los Angeles}, \postcode{90095}, \state{California}, \country{USA}}}

\affil[7]{\orgdiv{The Molecular Biology Institute}, \orgname{University of California}, \orgaddress{\street{
611 Charles E. Young Drive East}, \city{Los Angeles}, \postcode{90095}, \state{California}, \country{USA}}}

\abstract{
Efficient measures to determine similarity of quantum states, such as the fidelity metric, have  been widely studied. In this paper, we address the problem of defining a similarity measure for quantum operations that can be \textit{efficiently estimated}.  Given two quantum operations, $U_1$ and $U_2$, represented in their circuit forms, we first develop a quantum sampling circuit to estimate the normalized Schatten 2-norm of their difference ($\norm{U_1-U_2}_{S_2}$) with precision $\epsilon$, using \textit{only one clean qubit and one classical random variable}. We prove a $\text{Poly}(\frac{1}{\epsilon})$ upper bound on the sample complexity, which is independent of the size of the quantum system. 
We then show that such a similarity metric is directly related to a functional definition of similarity of unitary operations using the conventional fidelity metric of quantum states ($\mathcal{F}$): If 
$\norm{U_1-U_2}_{S_2}$ is sufficiently small (e.g. $ \leq \frac{\epsilon}{1+\sqrt{2(1/\delta - 1)}}$) then the fidelity of  states obtained by processing the same randomly and uniformly picked pure state, $\ket{\psi}$, is as high as needed ($\mathcal{F}({U}_1\ket{\psi}, {U}_2\ket{\psi})\geq 1-\epsilon$) with probability exceeding $1-\delta$.
We provide example applications of this efficient similarity metric estimation framework to quantum circuit learning tasks, such as finding the square root of a given unitary operation. 
	}

\keywords{Quantum Operation Similarity, Schatten Norms, Quantum Circuit Learning}


\maketitle

\section{Introduction}
Recent advances in quantum approximate optimization algorithms (QAOA, \cite{farhi2014quantum,farhi2016quantum}), variational quantum eigensolver (VQE, \cite{Peruzzo_2014}) and the promise of   implementing such algorithms using noisy intermediate-scale quantum (NISQ) devices \cite{Preskill_2018} have rekindled the prospect of a new era in quantum computing.
Researchers have started experimenting with quantum machine learning algorithms such as quantum neural networks (QNN) \cite{KAK1995259,ezhov2000quantum} and quantum circuit learning \cite{Mitarai_2018, panella2011neural,gingrich2004non,schlimgen2021quantum,20092} that are based on variational quantum algorithms \cite{Cerezo_2021, lubasch2020variational}; a recent work has studied the price of ansatz used in such variational methods\cite{https://doi.org/10.48550/arxiv.2102.01759}. 
These algorithms assume a hybrid model  which takes advantage of both classical and quantum computations: loss functions are obtained by summing the outputs of a quantum machine whereas the variational parameters of the model (circuit) are learned using a classical optimizer. 

A critical factor in the formulation of a learning algorithm is the design of its loss functions, which often involves computing a similarity measure between a target objective and the output of the parameterized model (Fig. \ref{vqa}). In VQE, for instance, the objective is to determine the  lowest energy eigenstate of a given Hermitian operator $H$. The learning framework assumes an ansatz comprising a quantum circuit with a fixed topology but where  each gate is parameterized to generate a  candidate eigenstate vector  $\ket{\psi(\xi)}=U(\xi)\ket{\boldsymbol{0}}$, where $U(\xi)$ is the unitary operator determined by the parameters $\xi$. Such a pure state vector $\ket{\psi(\xi)}$ is an eigenstate, if  $H \ket{\psi(\xi)}=\lambda\ket{\psi(\xi)}$, where $\lambda$ is to be minimized when searching for the ground state. Thus, the objective function of VQE can be interpreted as minimizing the cosine similarity between $\ket{\psi(\xi)}$ and $H\ket{\psi(\xi)}$ or the expectation value  $\bra{\psi(\xi)}H\ket{\psi(\xi)}$. This loss term can be physically estimated by performing measurements corresponding to the observables used to define $H$. 
This similarity metric is related to the well known measure of fidelity used for determining similarity of quantum states,  \cite{nielsen_chuang_2010,doi:10.1080/09500349414552171,pedersen2007fidelity}, and has been extensively applied to distinguishing quantum states \cite{200721,article2,20012,10.1145/276698.276708,2015,doi:10.1080/09500340903477756,200910,20082, watrous2018theory}. 

In contrast, consider the problem where one wants to learn a given quantum operation $V$, which is also available when controlled by a clean qubit (see Table \ref{vqackt}). That is, the task is to learn $\xi$ such that $U(\xi) \approx V$. This problem of learning quantum operations is much less studied, in spite of its applications to quantum circuit synthesis \cite{10.1145/3503222.3507739} (where low-depth approximations of quantum circuits are needed) and to  distinguish quantum operations and channels \cite{PhysRevA.82.032339,PhysRevLett.98.230502,PhysRevA.78.012303,4787618,PhysRevLett.103.210501,PhysRevA.81.032329,lu2010optimal,10.5555/2231036.2231045,PhysRevLett.109.020506,6747300,7541701,6687245,article,200911,shimbo2018equivalence,2001,Yang2005DistinguishabilityCI,2005,2006,PhysRevLett.98.100503,Li2014ABO,200722,article4, watrous2018theory, chen2008ancillaassisted}. The main difficulty of such learning tasks is the design of a similarity metric between $U(\xi)$ and $V$ that can be \textit{efficiently estimated}. 
{In Gilchrist et al.'s work  \cite{Gilchrist_2005}, for example, a similarity metric that is blind to input unitary operations was studied and validated, but the estimation of the metric is inefficient because it requires an exponential number of quantum states. 
Other metrics such as the diamond norm \cite{https://doi.org/10.48550/arxiv.quant-ph/9806029,Regula_2021} have been conceptualized to distinguish quantum operations, but they heavily rely on the classical information of the input unitaries such as their eigen-decompositions. }

The Schatten norm, studied and explored from an information theory perspective \cite{P_rez_Garc_a_2006,Ben_Aroya_2008,Hayden_2008}, is another candidate, and the approximation of which has been proven to be DQC1-complete \cite{https://doi.org/10.48550/arxiv.0707.2831,https://doi.org/10.48550/arxiv.1706.09279}. These approximation schemes \cite{https://doi.org/10.48550/arxiv.0707.2831,https://doi.org/10.48550/arxiv.1706.09279}, however, require an exponential classical sample complexity when a clean qubit {(e.g. the control bit of the Hadamard test circuit in Table \ref{vqa})} is provided. Herein {we present a random sampling method, using few samples (e.g. $O(\frac{\log(2/\delta)}{2\epsilon^2})$ sample complexity) and an efficient sampling circuit design (e.g. $O(1)$ in depth), 
to estimate the normalized trace and normalized Schatten 2-norm of any given quantum operation.} 
We then formulate a similarity metric for quantum operations using the notion of fidelity of quantum states, and show how such a metric is closely related to the normalized Schatten 2-norm. As a consequence, one can use the normalized Schatten 2-norm of the difference between $U(\xi)$ and $V$ as a loss function to learn a target quantum operation $V$. 
\begin{figure}[ht]
\centering
\includegraphics[width=8cm]{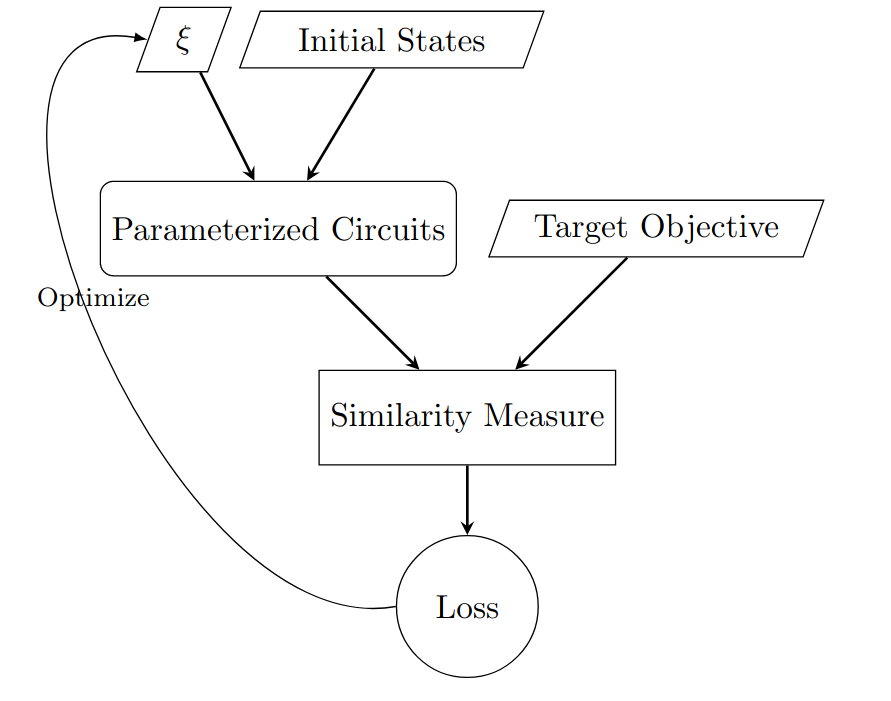}
\caption{A schematic illustration of variational quantum algorithms. In the case of VQE, the initial state is $\ket{\mathbf{0}}$, the parameterized circuit is $U$, the target objective is a Hamiltonian $H$, the similarity measure is the cosine similarity, and the loss term is $\bra{\mathbf{0}}U^\dagger(\xi)H U(\xi)\ket{\mathbf{0}}$.}
\label{vqa}
\end{figure}
\begin{table}[ht]
\[\Qcircuit @C=1em  @R=1em {
&\lstick{\ket{0}} & \gate{H} &\qw&\ctrl{1}&\qw&\gate{H}&\qw&  \meter\\
&\lstick{\ket{\boldsymbol{0}}} & {/}\qw&\gate{{U(\xi)}}&\gate{V}&\qw
&\qw&\qw\\
}\]
\caption{Hadamard test circuit. Given an arbitrary quantum operation $V$ controlled by a clean qubit, the above circuit computes $\Re{\bra{\psi(\xi)}V\ket{\psi(\xi)}}$, where $\ket{\psi(\xi)} = U(\xi)\ket{\boldsymbol{0}}$ and $U(\xi)$ is a quantum circuit parameterized by $\xi$. }\label{vqackt}
\end{table}

The paper is organized as follows. In Section \ref{background}, we provide the background concepts and notations. 
In Section \ref{sampling-ckt}, we present a sampling method to approximate the {normalized} trace of matrices that are unitarily similar to diagonal matrices, as well as to approximate the normalized Schatten 2-norm of arbitrary $N\cross M$ matrices. We prove an upper bound of the sample complexity of such a sampling method. In Section \ref{linear-comb}, we introduce the normalized Schatten 2-norm of mixed quantum operations, which can be estimated efficiently using the sampling method from the previous section and the Hadamard test circuits shown in Tables \ref{vqackt}. We also present an optimized circuit design for the approximation. In Section \ref{similarity}, we relate the normalized Schatten 2-norm to a similarity metric of quantum operations. {Finally in Section \ref{learning}, we present an application of the efficient approximation of the normalized Schatten 2-norm to quantum circuit learning.}
\section{Background and Notations}
\label{background}
{Given an $n$-qubit quantum system, a quantum state is specified by a density matrix $\rho = \sum_{i}p_i\ket{\psi_i}\bra{\psi_i}$, where $\ket{\psi_i} \in \mathbb{CP}^{N-1}$ ($\mathbb{CP}$ denotes the complex projective space) are pure state vectors and $N=2^n$ is the dimension of the Hilbert space representing the quantum system.} 
To deal with the equivalence class on $\mathbb{CP}^{N-1}$ we adopt the convention {of normalization} that a pure state $\ket{\psi}$ is a point on the boundary of a unit ball centered at the origin, i.e. $\ket{\psi}\in \partial B_{\mathbb{C}^N}(0, 1)$, and thus has a unit norm, i.e. $\bra{\psi}\ket{\psi} = 1$. 
We use the term ``quantum operation'' to refer to a  unitary map of a density matrix $\rho \rightarrow U \rho U^\dagger$.
This linear operator is a type of a Liouville space superoperator.
A unitary operator $U\in \mathbb{C}^{N\cross N}$ working on $n$ qubits can generally be decomposed as a product of unitary operators, where each $U_i$ is a unitary operator acting on only a reduced number of qubits.  For example: $U= \Pi_{i=1}^L U_i$, $U_i= U_i^{j,k}\otimes I^{s-\{j,k\}}$ where $s$ is the set of qubits and $U_i^{j,k}\in \mathbb{C}^{4\cross 4}$ acts on the $j^{th}$ and $k^{th}$ qubits. Such few-qubit operations {(e.g. $U_i ^{j,k}$)} are referred to as gates, and a quantum circuit is a visual representation of a sequence of gates used to represent a quantum operation. 
Some well-known quantum gates include 
$$\sigma_x = \begin{pmatrix} 0&1\\1&0
\end{pmatrix}, \quad \sigma_y=\begin{pmatrix}
0&-i\\
	i&0
\end{pmatrix}, \quad  \sigma_z=\begin{pmatrix}
1&0\\
0&-1
\end{pmatrix}, \quad H=\begin{pmatrix}
\frac{1}{\sqrt{2}} & \frac{1}{\sqrt{2}}\\
\frac{1}{\sqrt{2}} & \frac{-1}{\sqrt{2}}
\end{pmatrix}
$$
$$
\text{CNOT}=\begin{pmatrix}
1 & 0 & 0 & 0\\
0 & 1 & 0 & 0\\
0 & 0 & 0 & 1\\
0 & 0 & 1 & 0
\end{pmatrix}, \quad S=\begin{pmatrix}
1&0\\
0&i
\end{pmatrix}
$$
\begin{align*}R_x(\theta)&= e^{-i\frac{\theta}{2}\sigma_x}=\begin{pmatrix}
\cos(\frac{\theta}{2}) & -i\sin(\frac{\theta}{2})\\
-i\sin(\frac{\theta}{2})  &\cos(\frac{\theta}{2})
\end{pmatrix}\\
R_y(\theta)&= e^{-i\frac{\theta}{2}\sigma_y}=\begin{pmatrix}
\cos(\frac{\theta}{2}) & -\sin(\frac{\theta}{2})\\
\sin(\frac{\theta}{2})  &\cos(\frac{\theta}{2})
\end{pmatrix}\\
R_z(\theta)&= e^{-i\frac{\theta}{2}\sigma_z}=\begin{pmatrix}
e^{-i\frac{\theta}{2}} & 0\\
0&e^{i\frac{\theta}{2}}
\end{pmatrix}.
\end{align*}
	
In the simplest instance we can define a parametrized quantum circuit $U(\theta)$ as a circuit with learnable parameters $\theta$ for its rotational gates. The parameters $\theta$ are called variational parameters.

Given an arbitrary quantum unitary $V$ and a state $\ket{\psi}$, the expectation of $V$, $\bra{\psi}V\ket{\psi}$, can be estimated using a method called the Hadamard test \cite{10.1145/1132516.1132579}, {as shown in Table \ref{hadamard1} and Table \ref{hadamard2}. Letting $\text{Pr}(1)$ be the probability of observing $\ket{1}$ from measuring the control qubit in Table \ref{hadamard1} and Table \ref{hadamard2}, then $1-2\Pr(1)$ evaluates $\Re\{\bra{\psi}V\ket{\psi}\}$ and $\Im\{\bra{\psi}V\ket{\psi}\}$, respectively.} The control bit in the Hadamard test is called a clean qubit.
\begin{table}[ht]
\[\Qcircuit @C=1em @R=.6em {
& \lstick{\ket{0}} &\gate{H} &\ctrl{1} &\gate{H} &\qw &  \meter\\
&\lstick{\ket{\psi}}&{/}\qw& \gate{V}& \qw & \qw
}\]
\caption{Hadamard test: $\Re\{\bra{\psi}V\ket{\psi}\}$.}\label{hadamard1}
\end{table}
\begin{table}[ht]
\[\Qcircuit @C=1em @R=.6em {
	& \lstick{\ket{0}} &\gate{H} &\gate{S^\dagger} &\ctrl{1} &\gate{H} &\qw &  \meter\\
	&\lstick{\ket{\psi}}&{/}\qw&\qw &\gate{V}& \qw & \qw
}\]
\caption{Hadamard test: $\Im\{\bra{\psi}V\ket{\psi}\}$.}\label{hadamard2}
\end{table}

In this paper{,} we also consider a generalization of the definition of quantum operations comprising finite linear combinations of unitary quantum operations. We use $\tilde{U}$ to denote such mixed quantum operations: $\tilde{U}=\sum_{\kappa =1}^K \alpha_\kappa U_\kappa$, {where $U_\kappa\in \mathbb{C}^{N\times N}$ are unitary quantum operations} and $\sum_{\kappa=1}^K \abs{\alpha_{\kappa}}\leq 1$. For simplicity, we only consider the case when $K$ is of $O(1)$. 
{In the case when $\alpha_{\kappa}\geq 0$ and $\sum_{\kappa=1}^K\alpha_{\kappa} = 1$, a mixed quantum operation $\sum_{\kappa =1}^K \alpha_\kappa U_\kappa$ represents a quantum computing mixture model that
applies the unitary operation $U_{\kappa}$ to any state $\rho$ with probability $\alpha_{\kappa}$ and yields a density matrix output of $\sum_{\kappa=1}^K \alpha_{\kappa} U_{\kappa}\rho U_{\kappa}^\dagger$.} Such a mixed quantum system is a special case of a larger class, Completely-Positive Trace-Preserving (CPTP) Maps \cite{nielsen_chuang_2010}, and could in principle be used to model quantum errors. In general, one could ask the following question: Given {a mixed quantum} system as an oracle, can one design a variational quantum algorithm to {approximate it to a high accuracy}. This is one of the learning problems {we address} in Section~\ref{learning}.

For any matrix $A\in \mathbb{C}^{N\cross M}$, let $A=W\Sigma T^ \dagger$ be the singular value decomposition (SVD), where $\Sigma = \text{diag}(\sigma_i)$ and $\sigma_i\geq 0$ are the singular values. Note that $AA^\dagger$ is always diagonalizable with eigenvalues $\sigma_i^2$, and $AA^\dagger = W\Sigma T^\dagger T\Sigma^\dagger W^\dagger=W(\Sigma\Sigma^\dagger)W^\dagger=W\hat{\Sigma}W^\dagger $, where $\hat{\Sigma} = \textup{diag}(\sigma_i^2)$. Moreover, let $\ket{w_i}$ be the column vectors of $W$ (also called {left-singular vectors}) under bra-ket notation, $AA^\dagger = \sum_{i=1}^N \sigma_i^2\ket{w_i}\bra{w_i}$. 

A square matrix $A\in \mathbb{C}^{N\cross N}$ is unitarily similar to a diagonal matrix $D$ if $A=WDW^\dagger$ where $W$ is unitary and $D$ is diagonal in $\mathbb{C}^{N\cross N}$. In particular, for any matrix $A\in \mathbb{C}^{N\cross M}$, $AA^\dagger$ is unitarily similar to $\text{diag}(\sigma_i^2)$. Another result we use is that all unitary matrices are also unitarily similar to  diagonal matrices \cite{sakurai_napolitano_2017}.

Given two quantum states' density operators $\rho_1, \rho_2$, the fidelity is customarily defined as 
$\mathcal{F}_\rho(\rho_1,\rho_2)=\left[ \Tr(\sqrt{\sqrt{\rho_1}\rho_2\sqrt{\rho_1}})\right]^2$ {\cite{nielsen_chuang_2010}}. 
Given two pure states $\rho_1=\ket{\psi_1}\bra{\psi_1}$ and $\rho_2=\ket{\psi_2}\bra{\psi_2}$, it can be shown that $\mathcal{F}_\rho(\rho_1,\rho_2)= \abs{\bra{\psi_1}\ket{\psi_2}}^2$. In this paper we work with pure states, and this  simplified version of fidelity between wavefunctions will be used and denoted as $\mathcal{F}(\ket{\psi_1}, \ket{\psi_2})$. In general, fidelity measures how similar two quantum states are, and $\psi_1=\psi_2$ if and only if $\mathcal{F}(\psi_1, \psi_2)=1$.
\begin{definition}
The \textbf{normalized Schatten $p$-norm} \cite{P_rez_Garc_a_2006,Ben_Aroya_2008,Hayden_2008,https://doi.org/10.48550/arxiv.0707.2831,https://doi.org/10.48550/arxiv.1706.09279} for arbitrary matrix $A\in \mathbb{C}^{N\cross M}$ and {$p\in [1,\infty)$} is defined as \begin{align*}\norm{A}_{S_{p}} &= \Big(\frac{\sum_i{{\sigma_i}^p}}{N}\Big)^{\frac{1}{p}}
\end{align*}
where $\sigma_i$ are the singular values of $A$. Note that the Schatten 2-norm is related to the Frobenius norm via $\norm{A}_{S_2}=\frac{\norm{A}_F}{\sqrt{N}}$.
\end{definition}
We can relate the normalized Schatten norm of the difference of two unitary operations to a functional definition of similarity using the fidelity of states. 
\begin{definition}
Let $\ket{\psi}$ be a random variable defined on a distribution $\mathcal{J}=\text{Uni}[\partial B_{\mathbb{C}^N}(0,1)]$, i.e. $\ket{\psi}$ is uniformly random over all pure quantum states, we define two unitary operations $U_1$, $U_2$ to be pure-state \textbf{$(\delta, \epsilon)$-similar} if \begin{align*}
\mathbb{P}_{\psi\sim \mathcal{J}}(\mathcal{F}(U_1\ket{\psi}, U_2\ket{\psi})\geq 1-\epsilon)&\geq 1-\delta.
\end{align*} 
\end{definition}
Let $X_1$, $X_2$, ..., $X_m$ be independent random variables with {$X_i \in [a_i, b_i]\subset \mathbb{R}$} almost surely and define $S_m=\sum_{i=1}^m X_i$, the Chernoff-Hoeffding inequality \cite{hoeffding1994probability} states 
$$ \mathbb{P}(\abs{S_m-\mathbb{E}[S_m]}>\epsilon)<2e^{-\frac{2\epsilon^2}{\sum_{i=1}^m (b_i-a_i)^2}}.$$
A special case is when $X_1$, ... $X_m$ are iidrv on $[0,1]$ almost surely. Setting $\overline{X} = \frac{S_m}{m}$, we obtain the following inequality.
\begin{align*}
&\mathbb{P}(\abs{\overline{X}-\E[X_1]}>\epsilon)<2e^{-{2\epsilon^2 m}}.
\end{align*}
Note that when $m(\epsilon, \delta) =O( \frac{\log(2/\delta )}{2\epsilon^2})$, $\mathbb{P}(\abs{\overline{X}-\E[X_1]}\leq \epsilon)\geq 1-\delta$.
\section{Efficient Sampling Algorithms for Approximating Normalized Trace and Schatten Norms}
\label{sampling-ckt}
{
We first consider any matrix $A\in \mathbb{C}^{N\cross N}$ that is unitarily similar to a diagonal matrix (e.g. unitary matrices, Hermitian matrices, etc.) and present an efficient sampling technique to approximate its normalized trace. 
}
Recall that $A$ can be decomposed as: 
$$A=WDW^\dagger
=\sum_{i=1}^N d_i\ket{w_i}\bra{w_i}, w_i\in W$$
where $W$ is unitary and $D$ is diagonal.
The goal is to find a distribution $\mathcal{D}$ and a random vector $x\sim \mathcal{D}$ such that 
$$ \E_{x\sim  \mathcal{D}}\bra{x}A\ket{x}=\frac{\Tr(A)}{N}
.$$
{It suffices to show $\E_{x\sim \mathcal{D}}\bra{w}\ket{x} = \frac{1}{N}$ for all unit vectors $w\in \mathbb{C}^N$. As discussed in \cite{https://doi.org/10.48550/arxiv.0707.2831} and \cite{https://doi.org/10.48550/arxiv.1706.09279}, this is  equivalent to the uniform sampling of $\ket{x}$ (with replacement) from the standard basis $\{e_1,...,e_N\}$, which in general requires  $\Omega(N)$ sampling complexity. We show in Lemma \ref{lemma1} that if we construct $x(\theta)$ using a \textit{continuous} classical random variable $\theta$, then $\E_{\theta}\bra{w}\ket{x(\theta)} = \frac{1}{N}$ holds as desired. Under such a construction we can efficiently approximate the normalized trace $\frac{\Tr(A)}{N}$. 
}
\begin{lemma}
\label{lemma1}
Let $A\in \mathbb{C}^{N\cross N}$ be unitarily similar to a diagonal matrix and let $n=\lceil{\log N}\rceil$. Define a geometric sequence $(\omega)_{i=1}^{n}$ with $w_i=2^i$. Let random variable $\theta\sim \mathcal{D} = \text{Uni}[-\pi, \pi]$, and define a random vector $x(\theta) \in \mathbb{R}^N$ with $N$ entries $x_0, ..., x_{N-1}$ and $x_i (\theta) = \sqrt{\frac{2^n}{N}}\Pi_{j=1}^n \cos^{b_{i_j}}(\omega_j\theta)\sin^{1-b_{i_j}}(\omega_j\theta)$, where ${b_{i_1}...b_{i_n}}$ is the $n$-bit binary representation of $i$. For example, when $N=2^n-1$, we obtain $x(\theta)\in \mathbb{R}^N$ where the only missing entry is $\sin(\omega_1\theta)\sin(\omega_2\theta)...\sin(\omega_{n-1}\theta)\sin(\omega_{n}\theta)$,  \begin{align*}
&x(\theta)=\sqrt{\frac{2^n}{N}}\begin{pmatrix}
\cos(\omega_1\theta)\cos(\omega_2\theta)...\cos(\omega_{n-1}\theta)\cos(\omega_{n}\theta)\\
\cos(\omega_1\theta)\cos(\omega_2\theta)...\cos(\omega_{n-1}\theta)\sin(\omega_{n}\theta)\\
\cos(\omega_1\theta)\cos(\omega_2\theta)...\sin(\omega_{n-1}\theta)\cos(\omega_{n}\theta)\\
\cos(\omega_1\theta)\cos(\omega_2\theta)...\sin(\omega_{n-1}\theta)\sin(\omega_{n}\theta)\\
...\\
\sin(\omega_1\theta)\sin(\omega_2\theta)...\cos(\omega_{n-1}\theta)\cos(\omega_{n}\theta)\\
\sin(\omega_1\theta)\sin(\omega_2\theta)...\cos(\omega_{n-1}\theta)\sin(\omega_{n}\theta)\\
\sin(\omega_1\theta)\sin(\omega_2\theta)...\sin(\omega_{n-1}\theta)\cos(\omega_{n}\theta)\\
\end{pmatrix}.\end{align*}
Then
\begin{align*}
     &\E_\mathcal{\theta\sim D}\bra{x(\theta)}A\ket{x(\theta)}=\frac{\Tr(A)}{N}.
\end{align*}
\end{lemma}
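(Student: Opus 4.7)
The plan is to reduce the claim to the single identity $\E_{\theta\sim\mathcal{D}}\bigl[\ket{x(\theta)}\bra{x(\theta)}\bigr] = \frac{1}{N}I_N$. Once this is established, the lemma follows at once from cyclicity of the trace and linearity of expectation:
\begin{equation*}
\E_{\theta\sim\mathcal{D}}\bra{x(\theta)}A\ket{x(\theta)} = \E_{\theta\sim\mathcal{D}}\Tr\bigl(A\ket{x(\theta)}\bra{x(\theta)}\bigr) = \Tr\!\Bigl(A\cdot\tfrac{1}{N}I_N\Bigr) = \tfrac{\Tr(A)}{N}.
\end{equation*}
Note that the hypothesis that $A$ is unitarily similar to a diagonal matrix is not actually required for this reduction; it only appeared in the motivating discussion because $\Tr(A)=\Tr(D)$ is the quantity we wish to sample.

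The second step is to compute the $(i,k)$ entry $\E_\theta[x_i(\theta)x_k(\theta)]$ of the expected outer product explicitly. Using the definition of $x_i(\theta)$, one gets
\begin{equation*}
x_i(\theta)x_k(\theta) = \tfrac{2^n}{N}\prod_{j=1}^{n} f_j(\theta),
\end{equation*}
where, for each bit index $j$, $f_j=\cos^2(\omega_j\theta)$ if $b_{i_j}=b_{k_j}=1$, $f_j=\sin^2(\omega_j\theta)$ if $b_{i_j}=b_{k_j}=0$, and $f_j=\cos(\omega_j\theta)\sin(\omega_j\theta)$ if $b_{i_j}\neq b_{k_j}$. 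Applying the standard identities $\cos^2\alpha=\tfrac{1+\cos 2\alpha}{2}$, $\sin^2\alpha=\tfrac{1-\cos 2\alpha}{2}$, and $\cos\alpha\sin\alpha=\tfrac{\sin 2\alpha}{2}$ rewrites each $f_j$ as $\tfrac12$ times either $(1\pm\cos(2\omega_j\theta))$ or $\sin(2\omega_j\theta)$. Expanding the product and then applying product-to-sum identities turns $x_i(\theta)x_k(\theta)$ into a finite linear combination of $\cos(m\theta)$ and $\sin(m\theta)$, where every appearing frequency $m$ has the form $m=\sum_{j\in T}\varepsilon_j\, 2^{j+1}$ for some nonempty $T\subseteq\{1,\dots,n\}$ and $\varepsilon_j\in\{-1,+1\}$, together with a single constant (frequency-zero) term that arises only from the "1" factors.

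The decisive step is a uniqueness-of-binary-representation argument. Because $\omega_j=2^j$ is geometric with ratio $2$, for any nonempty $T$ and any signs $\varepsilon_j\in\{-1,+1\}$ we have $\bigl|\sum_{j\in T}\varepsilon_j 2^{j+1}\bigr|\geq 2^{j^{*}+1}-\sum_{j<j^{*}}2^{j+1}=4$, where $j^{*}=\max T$. Thus every nonconstant frequency $m$ is a nonzero integer, so $\int_{-\pi}^{\pi}\cos(m\theta)\,d\theta=\int_{-\pi}^{\pi}\sin(m\theta)\,d\theta=0$ and the entire expectation collapses to the constant term. When $i=k$, every $f_j$ contributes a $\tfrac12$ to this constant, giving $\E[x_i^2]=\tfrac{2^n}{N}\cdot\tfrac{1}{2^n}=\tfrac{1}{N}$. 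When $i\neq k$, at least one $j$ has $b_{i_j}\neq b_{k_j}$, so the corresponding $f_j=\tfrac12\sin(2\omega_j\theta)$ contributes no constant term, forcing $\E[x_i x_k]=0$. Together these give $\E_\theta[\ket{x(\theta)}\bra{x(\theta)}]=\tfrac{1}{N}I_N$ and hence the lemma.

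The main obstacle I anticipate is the bookkeeping of the trigonometric expansion — one must verify carefully that \emph{every} frequency appearing after the product-to-sum expansion lies in the set $\{\sum_{j\in T}\varepsilon_j 2^{j+1}:T\neq\emptyset\}\cup\{0\}$, and that the constant term evaluates to exactly $\tfrac{1}{2^n}\delta_{ik}$ with the correct sign pattern tracked through the $\pm\cos(2\omega_j\theta)$ branches. The geometric choice $\omega_j=2^j$ is what makes this bookkeeping succeed; any other sequence that admitted a nontrivial signed-integer relation $\sum\varepsilon_j\omega_j=0$ would produce surviving oscillatory terms and the identity would fail.
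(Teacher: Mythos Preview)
Your proof is correct and follows essentially the same route as the paper: both establish $\E_\theta[x_j(\theta)x_k(\theta)]=\tfrac{1}{N}\delta_{jk}$ by expanding the trigonometric product and using that any signed sum $\sum_{j\in T}\pm\, 2\omega_j$ with $T\neq\emptyset$ is a nonzero integer (the paper states this as a fact proved by induction; you give the explicit bound $\lvert\sum_{j\in T}\varepsilon_j 2^{j+1}\rvert\geq 4$), so all oscillatory terms integrate to zero over $[-\pi,\pi]$. The one noteworthy difference is your final step: the paper passes through the eigen-decomposition $A=\sum_i d_i\ket{w_i}\bra{w_i}$ and shows $\E_\theta\lvert\bra{w_i}\ket{x(\theta)}\rvert^2=\tfrac{1}{N}$ for each eigenvector, whereas you package the same entrywise identity as $\E_\theta[\ket{x(\theta)}\bra{x(\theta)}]=\tfrac{1}{N}I_N$ and conclude by trace cyclicity. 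Your version is slightly cleaner and, as you correctly note, shows that the hypothesis that $A$ be unitarily similar to a diagonal matrix is not actually needed for the conclusion.
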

\begin{proof}
{We first note that any signed sum of any subset of $\omega_i$'s is a \textit{nonzero} integer, i.e. $\forall S\subset \{1,...,n\}: \sum_{s\in S}\pm\omega_s\in \mathbb{Z}\backslash \{0\}$. This statement can be proved by induction, and we omit the proof here.}  For any such non-empty $S\subset \{1,...,n\}$, it follows that 
\begin{align}&\E_{\theta\sim \mathcal{D}}\Pi_{s\in S} e^{\pm i 2\omega_s\theta}=\frac{1}{2\pi} \int_{-\pi}^{\pi} e^{2\sum_{s\in S}\pm i\omega_s\theta}d\theta=0. \label{id}\end{align}
Next, we show that for all $j\neq k$, $\E_{\theta\sim \mathcal{D}}x_j x_k=0$ and $\E_{\theta\sim \mathcal{D}}{x_i}^2=\frac{1}{N}$.
For any pair $(j,k)$, 
let $c_{1}c_2...c_n$ be the binary representation of $j\oplus k$. We define $S_0=\{p\in \mathbb{N}\cap [1,n]: c_p=0\}$, $S_1=\{q\in \mathbb{N}\cap [1,n]:c_q=1\}$, and $2^{S_0}$, $2^{S_1}$ to be the corresponding power sets. Note that $S_0\cap S_1=\emptyset$. 
\begin{align*}
&\E_{\theta\sim \mathcal{D}} x_j x_k\\
&=\E_{\theta\sim \mathcal{D}} \frac{2^n}{N} \Pi_{p\in S_0}\frac{1\pm \cos(2\omega_p \theta)}{2}  \Pi_{q\in S_1}\frac{\sin(2\omega_q \theta)}{2}\\
&= \E_{\theta\sim \mathcal{D}}\frac{2^n}{N} \sum_{S\in 2^{S_0}}\frac{\pm 1}{2^{\abs{S_0}}}\frac{1}{2^\abs{S_1}} \Pi_{q\in S_1}{\sin(2\omega_q \theta)} \Pi_{p\in S} \cos(2\omega_p \theta)\\
&= \E_{\theta\sim \mathcal{D}}\frac{2^n}{N}\frac{1}{2^{n}}\sum_{S\in 2^{S_0}}\frac{\pm 1}{2^{\abs{S}}(2i)^\abs{S_1}} \Pi_{q\in S_1}(e^{i2\omega_q \theta} - e^{-i2\omega_q \theta})\Pi_{p\in S} (e^{i2\omega_p \theta} + e^{-i2\omega_p \theta}) \\
&=\frac{1}{N}\frac{1}{(2i)^\abs{S_1}}\sum_{S\in 2^{S_0}}\frac{\pm 1}{2^{\abs{S}}} \E_{\theta\sim \mathcal{D}} \Pi_{q\in S_1}(e^{i2\omega_q \theta} - e^{-i2\omega_q \theta})\Pi_{p\in S} (e^{i2\omega_p \theta} + e^{-i2\omega_p \theta}).
\end{align*}

When $j\neq k$, $j\oplus k\neq 0$ and $S_1\neq \emptyset$. It follows from (\ref{id}) that all $\E_{\theta\sim \mathcal{D}} \Pi_{q\in S_1}(e^{i2\omega_q \theta} - e^{-i2\omega_q \theta})\Pi_{p\in S} (e^{i2\omega_p \theta} + e^{-i2\omega_p \theta})$ evaluates to $0$. Therefore, $\E_{\theta\sim \mathcal{D}} x_j x_k=0$.

Analogously when $j=k$, $S_0=\{1,...,n\}$ and $S_1=\emptyset$. 
$$\E_{\theta\sim\mathcal{D}} x_j ^2=\frac{1}{N}+\frac{1}{N}\sum_{S\in 2^{S_0}\backslash \emptyset}\frac{\pm 1}{2^{\abs{S}}} \E_{\theta\sim \mathcal{D}} \Pi_{p\in S} (e^{i2\omega_p \theta} + e^{-i2\omega_p \theta}) = \frac{1}{N}.$$

We have thus shown that $x$ is unbiased under the standard basis $\{e_1, e_2, ..., e_N\}$ \cite{bandyopadhyay2002new}. {It remains to show} that for arbitrary unit vector $w$, $\E_{\theta\sim D}\abs{\bra{x}\ket{w_j}}^2=\frac{1}{N}$. We decompose $w$ in the standard basis, i.e. $w=\sum_{j=1}^N w_j e_j$.
$$ \E_{\theta\sim D}\abs{\bra{x}\ket{w}}^2
=\sum_{j,k=1}^N\E_{\theta\sim D}x_j x_k w^*_j w_k=\sum_{j=1}^N\E_{\theta\sim D}\abs{x_j}^2 \abs{w_j}^2
=\frac{1}{N}.
$$
The main claim $\E_\mathcal{\theta \sim D}\bra{x}A\ket{x}=\frac{\Tr(A)}{N}$ follows.

\end{proof}
{By Lemma \ref{lemma1}, }for an arbitrary matrix $A\in \mathbb{C}^{N\cross N}$ that's unitarily similar to a diagonal matrix, we can approximate $\frac{\Tr(A)}{N}$ using $m$ random samples. Namely, we randomly sample $\theta_1,...,\theta_m \sim \mathcal{D} = \text{Uni}[-\pi, \pi]$ and use $x(\theta_i)$ as defined in Lemma \ref{lemma1} to approximate \begin{align}  &\widehat{\frac{\Tr(A)}{N} }= \frac{1}{m}\sum_{i=1}^m \bra{x(\theta_i)}A \ket{x(\theta_i)}. ~\label{trace_ap}
\end{align}
{We study the sample complexity for such an approximation to achieve a low error rate with high success probability.}
\begin{theorem}
\label{theorem1}
Let $A\in \mathbb{C}^{N\cross N}$ be unitarily similar to a diagonal matrix. For any $\delta$, $\epsilon > 0$, with sample complexity $m(\epsilon, \delta)=O(\frac{\log(2/\delta)}{4\epsilon^2})$, samples $\theta_1,...,\theta_{m(\epsilon, \delta)} \sim \text{Uni}[-\pi, \pi]$, and $x(\theta_i)$ as defined in Lemma \ref{lemma1}, 
the following holds for the classical approximation of $\frac{\Tr(A)}{N}$ using (\ref{trace_ap}).
\begin{align*}
    &\mathbb{P}\Big(\abs{\widehat{\frac{\Tr(A)}{N} } -  \frac{\Tr(A)}{N}}< \epsilon \Big) > 1-\delta.
\end{align*}
\end{theorem}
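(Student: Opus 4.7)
The plan is to view $\widehat{\Tr(A)/N}$ as an empirical mean of i.i.d.\ random variables $X_i := \bra{x(\theta_i)}A\ket{x(\theta_i)}$ and invoke the Chernoff--Hoeffding inequality recalled at the end of Section~\ref{background}. Lemma~\ref{lemma1} already gives the unbiasedness $\E[X_i]=\Tr(A)/N$, and independence of the $X_i$'s follows because the $\theta_i$'s are drawn i.i.d.\ from $\text{Uni}[-\pi,\pi]$ and $X_i$ is a fixed deterministic function of $\theta_i$. The remaining work is therefore purely a concentration-of-measure argument.

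The first real step is to bound the range of each $X_i$. I would first show $\|x(\theta)\|_2^2 \le 2^n/N \le 2$ uniformly in $\theta$: expanding $\|x(\theta)\|_2^2 = (2^n/N)\sum_{i=0}^{N-1}\prod_{j=1}^n \cos^{2 b_{i_j}}(\omega_j\theta)\sin^{2(1-b_{i_j})}(\omega_j\theta)$, the analogous sum taken over the full index set $\{0,\dots,2^n-1\}$ factors as $\prod_{j=1}^n(\cos^2(\omega_j\theta)+\sin^2(\omega_j\theta))=1$, so the partial sum is at most $1$. Using the eigendecomposition $A=\sum_j \lambda_j \ket{w_j}\bra{w_j}$ guaranteed by the hypothesis, $|X_i| = \big|\sum_j \lambda_j |\langle w_j|x(\theta_i)\rangle|^2\big| \le \max_j|\lambda_j|\cdot\|x(\theta_i)\|_2^2$, so both the real and imaginary parts of each $X_i$ lie in a bounded interval whose length depends only on $A$ (and is an absolute constant whenever the spectral radius of $A$ is $O(1)$, as for unitary $A$).

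With this bound in hand, the proof concludes by applying the Chernoff--Hoeffding inequality separately to the real and imaginary parts of $\widehat{\Tr(A)/N}-\Tr(A)/N$ and combining the two deviation events via a union bound with threshold $\epsilon/\sqrt{2}$ on each component. This yields $\mathbb{P}\big(|\widehat{\Tr(A)/N}-\Tr(A)/N|\ge \epsilon\big)\le 2\exp(-c\, m\epsilon^2)$ for a constant $c>0$ depending only on $A$; solving for the smallest $m$ making the right-hand side $\le\delta$ gives the claimed $m=O(\log(2/\delta)/\epsilon^2)$ rate. I expect the main obstacle to be mostly bookkeeping---tracking the precise Hoeffding constant so that the factor $1/(4\epsilon^2)$ in the statement is recovered, and correctly handling the complex-valued estimator by splitting into real and imaginary components before recombining by a union bound.
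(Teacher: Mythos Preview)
Your proposal is correct and follows essentially the same route as the paper: apply the Chernoff--Hoeffding inequality separately to the real and imaginary parts of the empirical estimator with tolerance $\epsilon/\sqrt{2}$ each, then combine via a union bound. The paper's proof is a one-line sketch that omits the range bound on $X_i$; your explicit verification that $\|x(\theta)\|_2^2\le 2^n/N\le 2$ and hence $|X_i|\le C\,\rho(A)$ (with $\rho(A)$ the spectral radius) is a detail the paper leaves implicit but which is indeed needed for Hoeffding to apply, and your observation that the hidden constant depends on $\rho(A)$ is a point the paper does not address.
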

\begin{proof}
The Theorem follows from the Chernoff-Hoeffding bound for complex numbers where we bound the precision for both real and imaginary parts to be within $\frac{\epsilon}{\sqrt{2}}$. 
\end{proof}
{The sampling method in (\ref{trace_ap})} can be generalized to estimate the normalized Schatten $p$-norms when $p$ is even, but in this paper we are particularly interested in the case when $p=2$. 

For arbitrary matrix $A\in \mathbb{C}^{N\cross M}$, $AA^\dagger$ is unitarily similar to a diagonal matrix in $\mathbb{C}^{N\cross N}$ with $\sigma_i^2$ along its diagonal. {We observe that $\norm{A}_{S_2}=\sqrt{\frac{\Tr(AA^\dagger)}{N}}$, based on which we approximate $\norm{A}_{S_2}$.} Namely, $$\widehat{\norm{A}}_{S_2} = \sqrt{\widehat{\frac{\Tr(AA^\dagger)}{N}}} \label{classical_s2}.$$

\begin{theorem}
\label{theorem2}
Let $A\in \mathbb{C}^{N\cross M}$ be an arbitrary matrix. For any $\delta$, $\epsilon > 0$, with sample complexity $m(\epsilon, \delta)=O(\frac{\log (2/\delta)}{2\epsilon^2}\min \{\epsilon^{-2}, \norm{A}_{S_2}^{-2}\})$, samples $\theta_1,...,\theta_{m(\epsilon, \delta)} \sim \text{Uni}[-\pi, \pi]$, and $x(\theta_i)$ as defined in Lemma \ref{lemma1}, the following holds for the classical approximation of $\norm{A}_{S_2}$ using (\ref{classical_s2}).
\begin{align*}
    &\mathbb{P}\Big(\abs{\widehat{\norm{A}}_{S_2} - \norm{A}_{S_2}}< \epsilon \Big) > 1-\delta.
\end{align*}
\end{theorem}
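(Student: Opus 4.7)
The plan is to reduce Theorem 2 to Theorem 1 applied to the matrix $AA^\dagger$. Since $AA^\dagger \in \mathbb{C}^{N\times N}$ is Hermitian positive semidefinite and, via the SVD $A = W\Sigma T^\dagger$, unitarily similar to the diagonal matrix $\textup{diag}(\sigma_i^2)$, Theorem 1 directly yields an estimator $\hat{T}$ of $T := \Tr(AA^\dagger)/N = \norm{A}_{S_2}^{\,2}$ satisfying $\mathbb{P}(|\hat T - T| < \epsilon') > 1-\delta$ with $O(\log(2/\delta)/\epsilon'^{\,2})$ samples. I would first note that each single-shot estimate $\bra{x(\theta_i)}AA^\dagger\ket{x(\theta_i)} = \norm{A^\dagger x(\theta_i)}^2$ is a non-negative real number (because $x(\theta_i)$ is real-valued and $AA^\dagger$ is positive semidefinite), so both $\hat T$ and $T$ are non-negative reals and $\widehat{\norm{A}}_{S_2} = \sqrt{\hat T}$, $\norm{A}_{S_2} = \sqrt{T}$ are well-defined.

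The core of the proof is to lift the concentration bound on $T$ to one on $\sqrt{T}$. For this I would use two complementary elementary inequalities valid for all $a,b\ge 0$:
\begin{align*}
|\sqrt{a} - \sqrt{b}| &\le \sqrt{|a-b|}, \\
|\sqrt{a} - \sqrt{b}| &\le \frac{|a-b|}{\sqrt{b}} \quad (b > 0).
\end{align*}
The first inequality, applied with $\epsilon' = \epsilon^{2}$, forces $|\widehat{\norm{A}}_{S_2} - \norm{A}_{S_2}| < \epsilon$ at sample cost $O(\log(2/\delta)/\epsilon^{4})$. The second, applied with $\epsilon' = \epsilon\,\norm{A}_{S_2}$, gives the same error guarantee at the sharper cost $O(\log(2/\delta)/(\epsilon^{2}\norm{A}_{S_2}^{\,2}))$. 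Taking the smaller of the two choices yields the claimed complexity $O\!\left(\frac{\log(2/\delta)}{\epsilon^{2}}\min\{\epsilon^{-2},\,\norm{A}_{S_2}^{-2}\}\right)$.

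The main subtlety — rather than a real obstacle — is that the tighter second regime appears to presuppose knowledge of $\norm{A}_{S_2}$. This is not required operationally: the ``min'' is a statement about the analysis, since one can always safely run with $\epsilon'=\epsilon^{2}$, and the second bound simply records that whenever $\norm{A}_{S_2}$ happens to exceed $\epsilon$ the quadratic $\epsilon^{-4}$ dependence is unnecessarily pessimistic. Because the map $T \mapsto \sqrt{T}$ is deterministic, no union bound or extra probabilistic step is required beyond Theorem 1; and since $\Tr(AA^\dagger)/N$ is real, the complex Chernoff–Hoeffding argument used for Theorem 1 specializes to a purely real concentration without losing the constant.
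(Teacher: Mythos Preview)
Your proposal is correct and follows essentially the same route as the paper: reduce to Theorem~1 applied to $AA^\dagger$, then lift the concentration on $T=\Tr(AA^\dagger)/N$ to one on $\sqrt{T}$ via a square-root perturbation argument, with the two regimes $\epsilon'=\epsilon^2$ and $\epsilon'=\epsilon\,\norm{A}_{S_2}$ yielding the $\min\{\epsilon^{-2},\norm{A}_{S_2}^{-2}\}$ factor. The only cosmetic difference is that the paper handles the square-root step by an explicit completing-the-square case analysis (their Appendix), whereas you invoke the two standard inequalities $|\sqrt a-\sqrt b|\le\sqrt{|a-b|}$ and $|\sqrt a-\sqrt b|\le |a-b|/\sqrt b$; these are equivalent arguments.
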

\begin{proof}
It suffices, see Appendix \ref{appendixthm2}, to show 
\begin{align*}
    &\mathbb{P}\Big(\abs{\widehat{\frac{\Tr(AA^\dagger)}{N}} - \norm{A}^2_{S_2}}< \epsilon \max\{\epsilon, \norm{A}_{S_2}\}\Big) > 1-\delta,
\end{align*}which follows from the result of normalized trace approximation in Theorem \ref{theorem1} and the Chernoff-Hoeffding bound.
\end{proof}

{Since all 
sampled state vectors $x(\theta_i)$ have dimension $N$}, classically evaluating each $\bra{x(\theta_i)}AA^\dagger \ket{x(\theta_i)}$ requires $\text{Poly}(N)$ arithmetic operations. Next, we show how we can take advantage of quantum computing to {
speed up the evaluations and thus make the approximation of the normalized Schatten 2-norms more efficient.}
\section{Quantum Approximation of Normalized Schatten $2$-norms for Mixed Quantum Operations}
\label{linear-comb}
{We apply the general results from the previous section to quantum operations. Let $n$ be the number of qubits in a quantum system and $N=2^n$. A quantum operation (usually denoted as $U$) can be defined by a unitary matrix in $\mathbb{C}^{N\times N}$. 
}

Recall that we define a mixed quantum operation to be a linear combination of a finite $K\sim O(1)$ quantum operations with coefficients $(\alpha)_{\kappa=1}^K$ satisfying $\sum_{\kappa =1}^K \abs{\alpha_{\kappa}}\leq 1$:
\begin{align*}
    &\tilde{U} = \sum_{\kappa=1}^K \alpha_{\kappa} U_{\kappa}.
\end{align*}

{While all $U_{\kappa}$ are unitary quantum operations which are unitarily similar to diagonal matrices, their linear combinations $\tilde{U}$ may not be.} Therefore, the approximation of the normalized trace from Theorem \ref{theorem1} does not {generalize 
to mixed quantum operations. Nevertheless, the approximation of the normalized Schatten $2$-norms does generalize, and we present explicit quantum circuit constructions to approximate $\tilde{\norm{{U}}}_{S_2}$.} We start with a toy example when $\tilde{U} = \frac{1}{\sqrt{2}}U_1-\frac{1}{\sqrt{2}}U_2$. For an arbitrary pure state $\ket{x}\in \partial B_{\mathbb{C}^N}(0,1)$, \begin{multline}
\bra{x}\tilde{U}\tilde{U}^\dagger\ket{x} 
=\frac{1}{2}\bra{x}(U_1-U_2)(U_1^\dagger-U_2^\dagger)\ket{x} 
=\frac{1}{2}(2 - 2\Re\{\bra{x}U_1U_2^\dagger\ket{x}\})  \\
=1-\Re\{\bra{x}U_1U_2^\dagger\ket{x}\}. \label{eq1}
\end{multline}
Similarly, we can generalize such result to mixed quantum operations $\tilde{U} = \sum_{\kappa=1}^K \alpha_{\kappa} U_{\kappa}$, where $\sum_{\kappa =1}^K \abs{\alpha_{\kappa}}\leq 1$.
\begin{align}
\bra{x}\tilde{U}\tilde{U}^\dagger\ket{x} &=\bra{x}\sum_{\kappa_1,\kappa_2=1}^K\alpha_{\kappa_1}\alpha^*_{\kappa2}U_{\kappa_1}U_{\kappa_2}^\dagger\ket{x}\nonumber\\
&=\sum_{\kappa=1}^K \abs{\alpha_{\kappa}}^2 + \sum_{1\leq \kappa_1<\kappa_2\leq K} 2\Re\{ \alpha_{\kappa_1}\alpha^*_{\kappa2}\bra{x}U_{\kappa_1}U_{\kappa_2}^\dagger\ket{x}\}\nonumber\\
&=\sum_{\kappa=1}^K \abs{\alpha_{\kappa}}^2 + \sum_{1\leq \kappa_1<\kappa_2\leq K} 2\Re\{ \alpha_{\kappa_1}\alpha^*_{\kappa2}\}\Re\{\bra{x}U_{\kappa_1}U_{\kappa_2}^\dagger\ket{x}\} \label{eq2}\\
&-\sum_{1\leq \kappa_1<\kappa_2\leq K} 2\Im\{ \alpha_{\kappa_1}\alpha^*_{\kappa2}\}\Im\{\bra{x}U_{\kappa_1}U_{\kappa_2}^\dagger\ket{x}\}\nonumber.
\end{align}
Since $U_{\kappa_1}, U_{\kappa_2}$ are quantum unitaries, the adjoint of them can be efficiently constructed 
by reversing the order of the gates. 

We show a simple construction of the quantum sampling circuit $S(\theta)$ (illustrated in Table \ref{U2}). {Let $n$ be the number of qubits, we define a geometric sequence $(\omega)_{i=1}^{n}$ with $w_i=2^i$.} \begin{align}
    &S(\theta) = \bigotimes_{i=1}^n R_y(2\omega_i \theta). \label{Sdef}
\end{align} 
Note that $S(\theta)\ket{\boldsymbol{0}}$ creates a quantum state $\ket{x(\theta)}$ {with a state vector matching the one defined} in Lemma \ref{lemma1}. 
\begin{table}[ht]
		\[\Qcircuit @C=1em  @R=1em {
			&\lstick{\ket{0}} & \gate{R_y(2\theta)}&\qw\\
			&\lstick{\ket{0}} & \gate{R_y(4\theta)}&\qw\\
			&\lstick{\ket{0}} & \gate{R_y(8\theta)}&\qw\\
		}\]

	 \caption{$S(\theta)$: sampling circuit for $3$ qubits.}\label{U2}	
\end{table}

{To apply the Hadamard test to evaluate $\Re\{ \alpha_{\kappa_1}\alpha^*_{\kappa2}\bra{x(\theta)}U_{\kappa_1}U_{\kappa_2}^\dagger\ket{x(\theta)}\}$, measurement circuits such as which in Table \ref{measurement_ckt} are used. }
\begin{table}[ht]
\[\Qcircuit @C=1em  @R=1em {
&\lstick{\ket{0}} & \gate{H} &\qw&\ctrl{1}&\ctrl{1}&\qw&\gate{H}&\qw&  \meter\\
&\lstick{\ket{\boldsymbol{0}}} & {/}\qw&\gate{S(\theta)}&\gate{U_{\kappa_2}^\dagger}&\gate{U_{\kappa_1}}
&\qw&\qw&\qw
}\]
\caption{Circuit for estimating $\Re\{\bra{\mathbf{0}}S^\dagger(\theta_i) U_{\kappa_1} U_{\kappa_2}^\dagger S(\theta_i)\ket{\mathbf{0}}\}$ using Hadamard test as in \ref{hadamard1}.}\label{measurement_ckt} 
\end{table}
\begin{lemma}
\label{lemma2}
$O(\frac{2\log(2/\delta)}{\epsilon^2})$ measurements suffice to bound the error from a Hadamard test to be within $\epsilon$ with probability $1-\delta$.
\end{lemma}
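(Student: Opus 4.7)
The plan is to reduce each Hadamard-test measurement to a Bernoulli trial and apply the Chernoff--Hoeffding bound stated in Section~\ref{background}. Recall that in the circuit of Table~\ref{hadamard1}, the probability of observing $\ket{1}$ on the control qubit satisfies $1-2\Pr(1)=\Re\{\bra{\psi}V\ket{\psi}\}$, and analogously for the imaginary-part circuit. Thus each independent run of the Hadamard test yields an i.i.d.\ Bernoulli random variable $X_i\in\{0,1\}$ with $\E[X_i]=\Pr(1)$.

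First, I would form the empirical mean $\overline{X}=\frac{1}{m}\sum_{i=1}^m X_i$ over $m$ independent runs and take $1-2\overline{X}$ as the estimator for the target quantity (say $\Re\{\bra{\psi}V\ket{\psi}\}$). Since $X_i\in[0,1]$ almost surely, the special case of Chernoff--Hoeffding recorded in Section~\ref{background} gives
\begin{align*}
\mathbb{P}\bigl(|\overline{X}-\E[X_1]|>\epsilon'\bigr)<2e^{-2m\epsilon'^{2}}.
\end{align*}
The error in the Hadamard estimator is exactly $|{(1-2\overline{X})}-{(1-2\E[X_1])}|=2|\overline{X}-\E[X_1]|$, so requiring it to be at most $\epsilon$ amounts to setting $\epsilon'=\epsilon/2$, which transforms the tail bound into $2e^{-m\epsilon^{2}/2}$.

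Finally, I would solve $2e^{-m\epsilon^{2}/2}\leq\delta$ for $m$, yielding $m\geq \frac{2\log(2/\delta)}{\epsilon^{2}}$, which matches the claimed $O\!\bigl(\frac{2\log(2/\delta)}{\epsilon^{2}}\bigr)$ sample complexity. There is no real obstacle here; the only care needed is the factor-of-two rescaling from the relation $1-2\Pr(1)$ between the Bernoulli mean and the quantity of interest, which is what turns the standard Hoeffding denominator $2\epsilon^{2}$ into $\epsilon^{2}/2$ and thus doubles the constant in the sample bound.
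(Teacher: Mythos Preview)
Your proof is correct and follows essentially the same argument as the paper: model each measurement as a Bernoulli variable, apply Hoeffding to the empirical mean, and absorb the factor $2$ from the relation $1-2\Pr(1)$ by setting $\epsilon'=\epsilon/2$, yielding the bound $m\geq \frac{2\log(2/\delta)}{\epsilon^{2}}$. The only cosmetic difference is notation ($X_i$ versus the paper's $M_i$).
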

\begin{proof}
Assume {$m$ measurements are performed on the control bit of the Hadamard test with outcomes $M_1, M_2, ..., M_m \in \{0,1\}$,  we can approximate $\Pr(1)$ using $\widehat{\Pr(1)}=\frac{1}{m}\sum_{i=1}^m M_i$. Applying the Chernoff-Hoeffding bound,}
$$\mathbb{P} \Big( \abs{(1-2\widehat{\Pr(1)}) - (1-2\Pr(1))}>\epsilon \Big) =\mathbb{P} \Big( \abs{\widehat{\Pr(1)} - \Pr(1)}>\frac{\epsilon}{2} \Big) < 2e^{-{\epsilon^2}m/2}.$$
When $m\geq \frac{2\log(2/\delta)}{\epsilon^2}$, $2e^{-{\epsilon^2}m/2}\leq \delta$, and this completes the proof.
\end{proof}
\begin{lemma}
\label{lem-mea}
Given an arbitrary pure state $\ket{x}\in \partial B_{\mathbb{C}^N}(0,1)$ and a mixed quantum operation $\tilde{U} = \sum_{\kappa=1}^K \alpha_{\kappa} U_{\kappa}$ with $\sum_{\kappa =1}^K \abs{\alpha_{\kappa}}\leq 1$ and $K\sim O(1)$, $O(\frac{32K^4 \log(4K^2/\delta)}{\epsilon^2})$ measurements suffice to estimate $\bra{x}\tilde{U}\tilde{U}^\dagger\ket{x}$ to an error within $\epsilon$ with probability at least $1-\delta$.
\end{lemma}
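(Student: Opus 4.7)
The plan is to reduce the estimation of $\bra{x}\tilde{U}\tilde{U}^\dagger\ket{x}$ to a collection of Hadamard tests and then allocate both the additive error budget and the failure-probability budget across those tests via a triangle inequality and a union bound. First, I would invoke the expansion already derived in (\ref{eq2}), which writes $\bra{x}\tilde{U}\tilde{U}^\dagger\ket{x}$ as the classically known constant $\sum_{\kappa=1}^K|\alpha_\kappa|^2$ plus at most $2\binom{K}{2}\leq K^2$ signed terms of the form $c_{\kappa_1,\kappa_2}\Re\{\bra{x}U_{\kappa_1}U_{\kappa_2}^\dagger\ket{x}\}$ or $c_{\kappa_1,\kappa_2}\Im\{\bra{x}U_{\kappa_1}U_{\kappa_2}^\dagger\ket{x}\}$, where each coefficient satisfies $|c_{\kappa_1,\kappa_2}|\leq 2|\alpha_{\kappa_1}||\alpha_{\kappa_2}|\leq 2$ under the hypothesis $\sum_\kappa|\alpha_\kappa|\leq 1$.

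Next, observe that each inner product $\bra{x}U_{\kappa_1}U_{\kappa_2}^\dagger\ket{x}$ is of the form $\bra{x}V\ket{x}$ for the unitary $V=U_{\kappa_1}U_{\kappa_2}^\dagger$, whose controlled version is realized by cascading controlled-$U_{\kappa_2}^\dagger$ and controlled-$U_{\kappa_1}$ as in Table \ref{measurement_ckt}. Consequently, the Hadamard tests of Tables \ref{hadamard1} and \ref{hadamard2}, fed the state $\ket{x}$, produce binary samples whose empirical means estimate $\Re\{\bra{x}U_{\kappa_1}U_{\kappa_2}^\dagger\ket{x}\}$ and $\Im\{\bra{x}U_{\kappa_1}U_{\kappa_2}^\dagger\ket{x}\}$; Lemma \ref{lemma2} then supplies a per-test sample count of order $\log(2/\delta')/\epsilon'^2$ to reach precision $\epsilon'$ with failure probability at most $\delta'$.

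The remaining step is the quantitative allocation. I would set the per-test precision to $\epsilon' = \epsilon/(4K^2)$, so that by the triangle inequality the accumulated error across the $\leq K^2$ terms is at most $K^2\cdot 2\cdot \epsilon' \leq \epsilon/2 \leq \epsilon$, and the per-test failure probability to $\delta' = \delta/(2K^2)$, so that a union bound over the independent Hadamard tests yields total failure probability at most $\delta$. Substituting into the Lemma \ref{lemma2} estimate produces a per-test cost of order $\log(4K^2/\delta)\big/(\epsilon/(4K^2))^2 = 32K^4 \log(4K^2/\delta)/\epsilon^2$, matching the claimed bound. The main obstacle here is really just bookkeeping rather than a conceptual hurdle: the $K^2$ Hadamard tests enter quadratically through the triangle inequality on the error (hence the $K^4$ factor in the final count) but only logarithmically through the union bound (hence the $\log(4K^2/\delta)$ factor), and since $K\sim O(1)$ no finer concentration across tests is needed.
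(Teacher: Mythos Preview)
Your proposal is correct and follows essentially the same route as the paper's proof: expand via (\ref{eq2}), bound each coefficient $2\Re\{\alpha_{\kappa_1}\alpha_{\kappa_2}^*\}$ and $2\Im\{\alpha_{\kappa_1}\alpha_{\kappa_2}^*\}$ by $2$, allocate per-test accuracy $\epsilon/(4K^2)$ and failure $\delta/(2K^2)$ via the triangle inequality and a union bound, and then invoke Lemma~\ref{lemma2}. The only cosmetic difference is that the paper counts the number of Hadamard tests as $O(2K^2)$ rather than your tighter $2\binom{K}{2}\le K^2$, but the budgets and final complexity $O(32K^4\log(4K^2/\delta)/\epsilon^2)$ are identical.
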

\begin{proof}
According to equation (\ref{eq2}), $\bra{x}\tilde{U}\tilde{U}^\dagger\ket{x}$ can be estimated with a summation of $O(2K^2)$ measurements from the Hadamard tests. By the assumption $\sum_{\kappa =1}^K \abs{\alpha_{\kappa}}\leq 1$, $2\Re\{ \alpha_{\kappa_1}\alpha^*_{\kappa2}\}$ and $2\Im\{ \alpha_{\kappa_1}\alpha^*_{\kappa2}\}$ are both $\leq 2$. Applying the triangle inequality and the union bound, it suffices to bound the error of each Hadamard test to be within $\frac{\epsilon}{4K^2}$ with probability at least $1-\frac{\delta}{2K^2}$. We then apply Lemma \ref{lemma2} and obtain an upper bound on the sample complexity, $m=O(\frac{32K^4 \log(4K^2/\delta)}{\epsilon^2})$.
\end{proof}
With the assumption $K\sim O(1)$, Lemma \ref{lem-mea} implies a low-order polynomial measurement complexity to estimate the measurement outcome of the Hadamard test to a marginal error. Thus, we make an assumption that \textit{all measurements are error-free} from now on.

For an arbitrary mixed quantum operation $\tilde{U}$, we can approximate $\tilde{\norm{U}}_{S_2}$ using $m$ random samples. Namely, we randomly sample $\theta_1,...,\theta_m \sim \mathcal{D} = \text{Uni}[-\pi, \pi]$ and use the sampling circuit $S$ as defined by (\ref{Sdef}) to approximate \begin{align} \widehat{\tilde{\norm{U}}}_{S_2} = \sqrt{\frac{1}{m}\sum_{i=1}^m \bra{\boldsymbol{0}}S^\dagger(\theta_i)\tilde{U}\tilde{U}^\dagger S(\theta_i)\ket{\boldsymbol{0}}}\label{quantum-s2}
\end{align}
where each $\bra{\boldsymbol{0}}S^\dagger(\theta_i)\tilde{U}\tilde{U}^\dagger S(\theta_i)\ket{\boldsymbol{0}}$ can be measured with negligible error by Lemma \ref{lem-mea}.
\begin{theorem}
\label{theorem3} For an arbitrary mixed quantum operation $\tilde{U}$, we can estimate its normalized Schatten 2-norm efficiently using quantum sampling circuits of depth overhead $O(1)$. Moreover, for any $\delta$, $\epsilon > 0$, with sample complexity $m(\epsilon, \delta)=O(\frac{\log (2/\delta)}{2\epsilon^2}\min \{\epsilon^{-2}, \tilde{\norm{U}}_{S_2}^{-2}\})$, samples $\theta_1,...,\theta_{m(\epsilon, \delta)}\sim \text{Uni}[-\pi, \pi]$, and $S(\theta_i)$ as defined by (\ref{Sdef}), the following holds for the quantum approximation of $\tilde{\norm{{U}}}_{S_2}$ using (\ref{quantum-s2}).
$$
\mathbb{P}\Big(\abs{\widehat{\tilde{\norm{U}}}_{S_2} - \tilde{\norm{{U}}}_{S_2}}< \epsilon \Big) > 1-\delta. $$
\end{theorem}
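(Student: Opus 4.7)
The plan is to reduce Theorem \ref{theorem3} to Theorem \ref{theorem2} by verifying that the quantum circuit $S(\theta)$ faithfully implements the classical random-vector construction from Lemma \ref{lemma1}, and then controlling the measurement overhead using Lemma \ref{lem-mea}. First I would observe that applying $R_y(2\omega_i\theta)$ to $\ket{0}$ produces $\cos(\omega_i\theta)\ket{0}+\sin(\omega_i\theta)\ket{1}$. Taking the tensor product as in (\ref{Sdef}) over $n=\log_2 N$ qubits, the amplitude of the computational basis state indexed by bits $b_{i_1}\ldots b_{i_n}$ is exactly $\prod_{j=1}^n \cos^{b_{i_j}}(\omega_j\theta)\sin^{1-b_{i_j}}(\omega_j\theta)$, which matches $x_i(\theta)$ from Lemma \ref{lemma1} (with the normalization constant $\sqrt{2^n/N}=1$). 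Hence $S(\theta)\ket{\boldsymbol{0}}=\ket{x(\theta)}$.

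Next, I would note that although a mixed operation $\tilde{U}=\sum_{\kappa=1}^K \alpha_\kappa U_\kappa$ need not be unitarily similar to a diagonal matrix, the matrix $\tilde{U}\tilde{U}^\dagger$ is Hermitian positive semidefinite (with eigenvalues equal to the squared singular values $\sigma_i^2$ of $\tilde{U}$) and is therefore unitarily similar to a diagonal matrix. So Lemma \ref{lemma1} applied to $A=\tilde{U}\tilde{U}^\dagger$ yields
\begin{equation*}
\E_{\theta\sim \mathcal{D}}\bra{\boldsymbol{0}}S^\dagger(\theta)\tilde{U}\tilde{U}^\dagger S(\theta)\ket{\boldsymbol{0}}
=\E_{\theta\sim \mathcal{D}}\bra{x(\theta)}\tilde{U}\tilde{U}^\dagger\ket{x(\theta)}
=\frac{\Tr(\tilde{U}\tilde{U}^\dagger)}{N}=\tilde{\norm{U}}_{S_2}^2.
\end{equation*}
Equivalently, the quantum estimator (\ref{quantum-s2}) is an unbiased estimator for $\tilde{\norm{U}}_{S_2}^2$, mirroring the classical estimator (\ref{classical_s2}) whose behavior is already controlled by Theorem \ref{theorem2}.

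Third, I would verify that each per-sample inner product $\bra{\boldsymbol{0}}S^\dagger(\theta_i)\tilde{U}\tilde{U}^\dagger S(\theta_i)\ket{\boldsymbol{0}}$ is physically accessible on a quantum device: expanding via equation (\ref{eq2}) produces a sum of $O(K^2)$ real and imaginary parts of quantities $\bra{x(\theta_i)}U_{\kappa_1}U_{\kappa_2}^\dagger\ket{x(\theta_i)}$, each estimable by Hadamard tests as in Tables \ref{hadamard1}, \ref{hadamard2}, and \ref{measurement_ckt}. Lemma \ref{lem-mea} bounds the per-sample measurement budget by a low-order polynomial in $K$, $1/\epsilon$, and $\log(1/\delta)$, so — invoking the assumption stated immediately after Lemma \ref{lem-mea} that all measurements are taken to be error-free — we can treat each sample as if it were the exact scalar $\bra{x(\theta_i)}\tilde{U}\tilde{U}^\dagger\ket{x(\theta_i)}$. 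The depth overhead relative to one application of $\tilde{U}$ is $O(1)$ because $S(\theta)$ is a single parallel layer of single-qubit $R_y$ gates and the Hadamard test adds only a constant number of gates on the ancilla.

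Finally, the sample complexity claim is obtained by applying the proof of Theorem \ref{theorem2} verbatim to $A=\tilde{U}$, viewed as a matrix in $\mathbb{C}^{N\times N}$: since the per-sample quantum readouts have the same mean and are bounded in $[0,\sum_{\kappa}|\alpha_\kappa|^2]\subseteq [0,1]$, the Chernoff--Hoeffding bound yields accuracy $\epsilon\max\{\epsilon,\tilde{\norm{U}}_{S_2}\}$ on $\widehat{\Tr(\tilde{U}\tilde{U}^\dagger)/N}$ with probability $1-\delta$, which propagates through the square root to give $|\widehat{\tilde{\norm{U}}}_{S_2}-\tilde{\norm{U}}_{S_2}|<\epsilon$. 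The main (minor) technical obstacle is precisely the observation in the second paragraph: Lemma \ref{lemma1} was stated for matrices unitarily similar to diagonal ones, and one must explicitly invoke it at the level of $\tilde{U}\tilde{U}^\dagger$ rather than $\tilde{U}$ itself, since the latter can fail to be normal once the coefficients $\alpha_\kappa$ mix non-commuting unitaries.
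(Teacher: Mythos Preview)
Your proposal is correct and follows exactly the route the paper sketches in its one-line proof (reduce to Theorem~\ref{theorem2}, implement each per-sample inner product via the Hadamard-test circuit of Table~\ref{measurement_ckt}, and invoke Lemma~\ref{lem-mea} to justify treating measurements as error-free); you simply spell out the intermediate observations---$S(\theta)\ket{\boldsymbol{0}}=\ket{x(\theta)}$ and the need to apply Lemma~\ref{lemma1} to $\tilde{U}\tilde{U}^\dagger$ rather than $\tilde{U}$---that the paper leaves implicit. One small slip: the per-sample value $\bra{x}\tilde{U}\tilde{U}^\dagger\ket{x}=\|\tilde{U}^\dagger\ket{x}\|^2$ is bounded above by $(\sum_\kappa|\alpha_\kappa|)^2$, not $\sum_\kappa|\alpha_\kappa|^2$ (take $U_1=U_2$ and $\alpha_1=\alpha_2=1/2$), but since both quantities are $\leq 1$ your Hoeffding application is unaffected.
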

\begin{proof}
The Theorem follows from Theorem \ref{theorem2}, Table \ref{measurement_ckt}, and Lemma \ref{lem-mea}.
\end{proof}
Note that the sample complexity for the approximation of the normalized Schatten 2-norm is independent of the number of qubits and is polynomial to $\frac{1}{\epsilon}$, which implies a potential quantum advantage. {In Figure \ref{thm3-fig}, we present simulated results supporting the relation $\epsilon \propto m(\epsilon, \delta)^{-1/2}$ which is in agreement with Theorem \ref{theorem3}.} 
\begin{figure}[h]
    \centering
    \includegraphics[width=8cm]{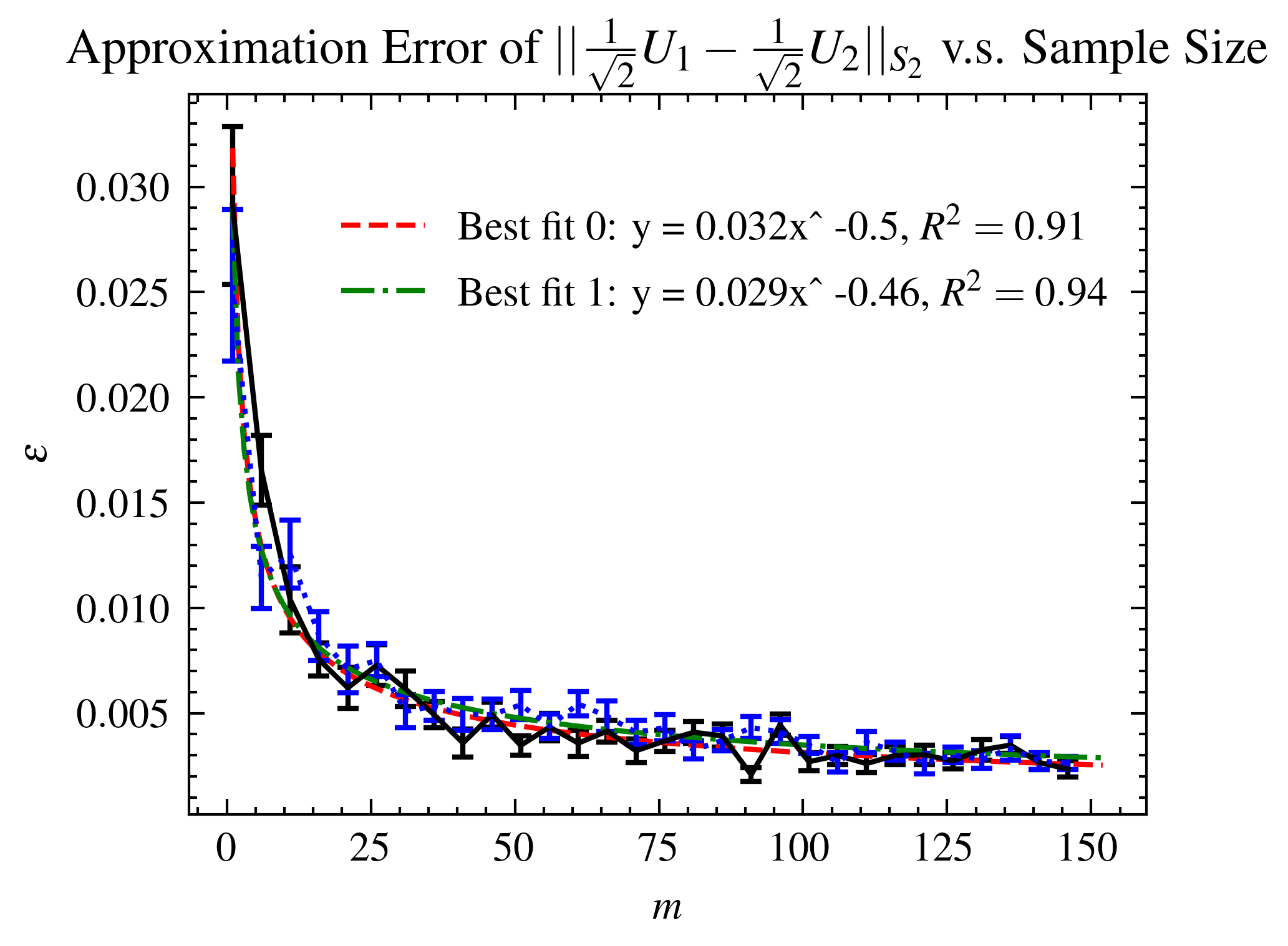}
    \caption{For each randomly generated mixed quantum operations $\tilde{U}=\frac{1}{\sqrt{2}}U_1-\frac{1}{\sqrt{2}}U_2$ where $U_1$, $U_2$ are randomly generated using the QR decomposition \cite{https://doi.org/10.48550/arxiv.math-ph/0609050}, we plot the error of the approximation $\widehat{\tilde{\norm{U}}}_{S_2}$ with respect to the sample size $m$ as defined in Theorem \ref{theorem3}. A 6-qubit system is considered. The means and standard errors are computed over 30 sets of random samples.}
    \label{thm3-fig}
\end{figure} 
In the next section, we build a connection from the normalized Schatten 2-norm of the difference of quantum operations to the similarity metric defined in Section \ref{background}.
\section{From the Normalized Schatten 2-Norm to a Fidelity-based Similarity Metric}
\label{similarity}
Recall the definition of pure-state $(\epsilon, \delta)$-similarity. Let $\ket{\psi}$ be a random state sampled from the distribution $\mathcal{J}=\text{Uni}[\partial B_{\mathbb{C}^N}(0,1)]$, we define two unitary operations $U_1$, $U_2$ to be pure-state $(\delta, \epsilon)$-similar if \begin{align*}
\mathbb{P}_{\psi\sim \mathcal{J}}(\mathcal{F}(U_1\ket{\psi}, U_2\ket{\psi})\geq 1-\epsilon)&\geq 1-\delta.
\end{align*} 
The following lemma is significant as it relates pure-state $(\epsilon, \delta)$-similarity to {the} normalized Schatten 2-norm.
\begin{lemma}\label{lemma3}
Let ${U}_1, U_2$ be two unitary quantum operations. ${U}_1$, ${U}_2$ are pure-state $(\epsilon, \delta)$-similar if $\norm{U_1-U_2}_{S_2} \leq \frac{\epsilon}{1+\sqrt{2(1/\delta - 1)}}$.
\end{lemma}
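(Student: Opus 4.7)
My plan is to convert the fidelity statement into a tail bound on a single non-negative observable, namely $X(\psi) := \|(U_1 - U_2)\ket{\psi}\|^2$, and then to apply a one-sided moment inequality against the Haar measure on pure states. Expanding gives $X(\psi) = 2 - 2\Re\bra{\psi}U_1^\dagger U_2\ket{\psi}$, and the trivial inequality $|z| \geq \Re z$ yields $\sqrt{\mathcal{F}(U_1\ket{\psi}, U_2\ket{\psi})} = |\bra{\psi}U_1^\dagger U_2\ket{\psi}| \geq 1 - X(\psi)/2$. Hence whenever $X(\psi) \leq \epsilon$ (and $\epsilon \leq 2$, else the lemma is vacuous) we get $\mathcal{F} \geq (1-\epsilon/2)^2 \geq 1 - \epsilon$, so it suffices to prove $\mathbb{P}_{\psi\sim\mathcal{J}}(X(\psi) > \epsilon) \leq \delta$.

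Next I would compute the first two moments of $X$ under $\mathcal{J}$. With $M := (U_1-U_2)^\dagger(U_1-U_2)$, the standard Haar identities on the unit sphere give $\mathbb{E}[X] = \Tr(M)/N = \|U_1-U_2\|_{S_2}^2 =: \eta^2$ and $\mathbb{E}[X^2] = [(\Tr M)^2 + \Tr(M^2)]/[N(N+1)]$, so the variance simplifies to $\sigma^2 = (\Tr(M^2)/N - \eta^4)/(N+1)$. Applying $\Tr(M^2) \leq \|M\|_\infty \Tr(M) \leq 4 N \eta^2$, where $\|U_1-U_2\|_\infty \leq 2$ comes from the triangle inequality for unitaries, yields the clean bound $\sigma^2 \leq 4\eta^2/(N+1) \leq 2\eta^2$ valid for every $N \geq 1$.

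Finally I would invoke Cantelli's one-sided Chebyshev inequality, $\mathbb{P}(X - \eta^2 \geq t) \leq \sigma^2/(\sigma^2+t^2)$. Taking $t = \epsilon - \eta^2$ and using $\sigma^2 \leq 2\eta^2$ reduces the requirement $\mathbb{P}(X > \epsilon) \leq \delta$ to the algebraic inequality $\epsilon \geq \eta^2 + \eta\sqrt{2(1/\delta - 1)}$. Since the hypothesis forces $\eta \leq \epsilon$, and the lemma is trivial unless $\epsilon \leq 1$, we may assume $\eta \leq 1$ and therefore $\eta^2 \leq \eta$; the hypothesis $\eta \leq \epsilon/(1+\sqrt{2(1/\delta-1)})$ then rearranges to exactly the above sufficient condition.

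The main obstacle is matching the stated constant. A crude Markov bound only gives $\eta \leq \sqrt{\delta\epsilon}$, and the naive operator-norm estimate $\sigma^2 \leq \|U_1-U_2\|_\infty^2 \cdot \eta^2 = 4\eta^2$ would produce the slightly weaker constant $1+2\sqrt{1/\delta-1}$. It is precisely the $1/(N+1)$ factor coming from the explicit Haar second-moment formula that trims $\sigma^2$ down to $2\eta^2$ and delivers the claimed denominator $1+\sqrt{2(1/\delta-1)}$.
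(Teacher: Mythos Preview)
Your proof is correct and shares the same skeleton as the paper's: lower-bound the fidelity via $\sqrt{\mathcal{F}}\ge \Re\bra{\psi}U_1^\dagger U_2\ket{\psi}$, then apply the Chebyshev--Cantelli inequality with mean deviation $c-\eta^2$ and variance bound $2\eta^2$, and solve for the threshold. The difference is in which random variable Cantelli is applied to and how the bound $\sigma^2\le 2\eta^2$ is reached. You work with $X(\psi)=\bra{\psi}M\ket{\psi}$ and invoke the explicit Haar second-moment identity $\E[X^2]=\bigl((\Tr M)^2+\Tr(M^2)\bigr)/\bigl(N(N+1)\bigr)$, then use $\Tr(M^2)\le 4\Tr(M)$ together with the $1/(N+1)$ factor to land on $\sigma^2\le 2\eta^2$. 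The paper instead applies Cantelli directly to the fidelity $\mathcal{F}$ and bounds its variance by the elementary estimate $\E[\mathcal{F}^2]-(\E[\mathcal{F}])^2\le 1-(1-\eta^2)^2\le 2\eta^2$, using only $0\le\mathcal{F}\le 1$ and the mean lower bound --- no Haar second-moment formula is needed. So your closing remark that the $1/(N+1)$ factor is what ``delivers the claimed denominator'' is a bit misleading: it is indeed necessary along \emph{your} route, but the paper obtains the identical constant by a shorter, dimension-free argument. Conversely, your computation buys something the paper's does not: the sharper bound $\sigma^2\le 4\eta^2/(N+1)$, which could be leveraged to improve the constant in the $(\epsilon,\delta)$-similarity threshold for large $N$.
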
 
\begin{proof}
We apply  statistical analysis to study the expectation and the variance of $\mathcal{F}(U_1\ket{\psi}, U_2\ket{\psi})$ when $\ket{\psi} \sim \mathcal{J}=\text{Uni}[\partial B_{\mathbb{C}^N}(0,1)]$. For any pure state $\ket{\psi}$,
\begin{multline*}
\mathcal{F}(U_1\ket{\psi}, U_2\ket{\psi})
=\abs{\bra{\psi}U_1^\dagger U_2\ket{\psi}}^2
\geq \Re^2\{\bra{\psi}U_1^\dagger U_2\ket{\psi}\}\\ =(1-\frac{\sum_{i=1}^N{\sigma_i}^2\abs{\bra{w_i}\ket{\psi}}^2}{2})^2 
\geq 1-\sum_{i=1}^N{\sigma_i}^2\abs{\bra{w_i}\ket{\psi}}^2,
\end{multline*}
where $w_i$ and $\sigma_i$ are the {left-singular} vectors and singular values of $U_1-U_2$. Let $\hat{\epsilon} = \norm{U_1-U_2}_{S_2}$ and $\ket{\psi} \sim \mathcal{J}$, 
$$\E_{\ket{\psi}\sim \mathcal{J}}\mathcal{F}(U_1\ket{\psi}, U_2\ket{\psi}) \geq 1-\norm{U_1-U_2}_{S_2}^2 \geq 1-\hat{\epsilon}^2.$$
{We next compute the variance of the fidelity},
\begin{align*}
\text{Var}_{\ket{\psi}\sim \mathcal{J}}[\mathcal{F}(U_1\ket{\psi}, U_2\ket{\psi})] &=\E_{\ket{\psi}} \mathcal{F}^2(U_1\ket{\psi}, U_2\ket{\psi}) - \big(\E_{\ket{\psi}} \mathcal{F}(U_1\ket{\psi}, U_2\ket{\psi})\big)^2\\&\leq 1 - (1-\hat{\epsilon}^2)^2 \leq 2\hat{\epsilon}^2.\end{align*}
{Application of the Chebyshev-Cantelli inequality,} for arbitrary $c>0$, 
$$\mathbb{P}_{\ket{\psi}\sim \mathcal{J}}\Big(\mathcal{F}(U_1\ket{\psi}, U_2\ket{\psi}) \geq 1-c\Big)\geq 1-\frac{2\hat{\epsilon}^2}{(c-\hat{\epsilon}^2)^2+2\hat{\epsilon}^2}.$$
Setting $c=\epsilon$ and $\frac{2\hat{\epsilon}^2}{(c-\hat{\epsilon}^2)^2+2\hat{\epsilon}^2} = \delta$, it suffices to have $\hat{\epsilon}\leq \frac{\epsilon}{1+\sqrt{2(1/\delta - 1)}}$.
\end{proof}
{Empirical relations between $\mathcal{F}(U_1\ket{\psi}, U_2\ket{\psi})$ and $\norm{U_1-U_2}_{S_2}$ are illustrated in Figures \ref{fig-sim1} and \ref{fig-sim2}. In both experiments, $U_1$ is a fixed random unitary generated by QR decomposition \cite{https://doi.org/10.48550/arxiv.math-ph/0609050} and $U_2$ is constructed by applying rotation operators to $U_1$. Figure \ref{fig-sim2} supports the bound derived in Lemma \ref{lemma3} as the probability for $U_1$, $U_2$ to be pure-state $\big((1+\sqrt{8})\norm{U_1-U_2}_{S_2}, 0.2\big)$-similar is much higher than $0.8$ for all pairs of $U_1$ and $U_2$ used (setting $\delta=0.2$).}
\begin{figure}[h]
    \centering
    \includegraphics[width=8cm]{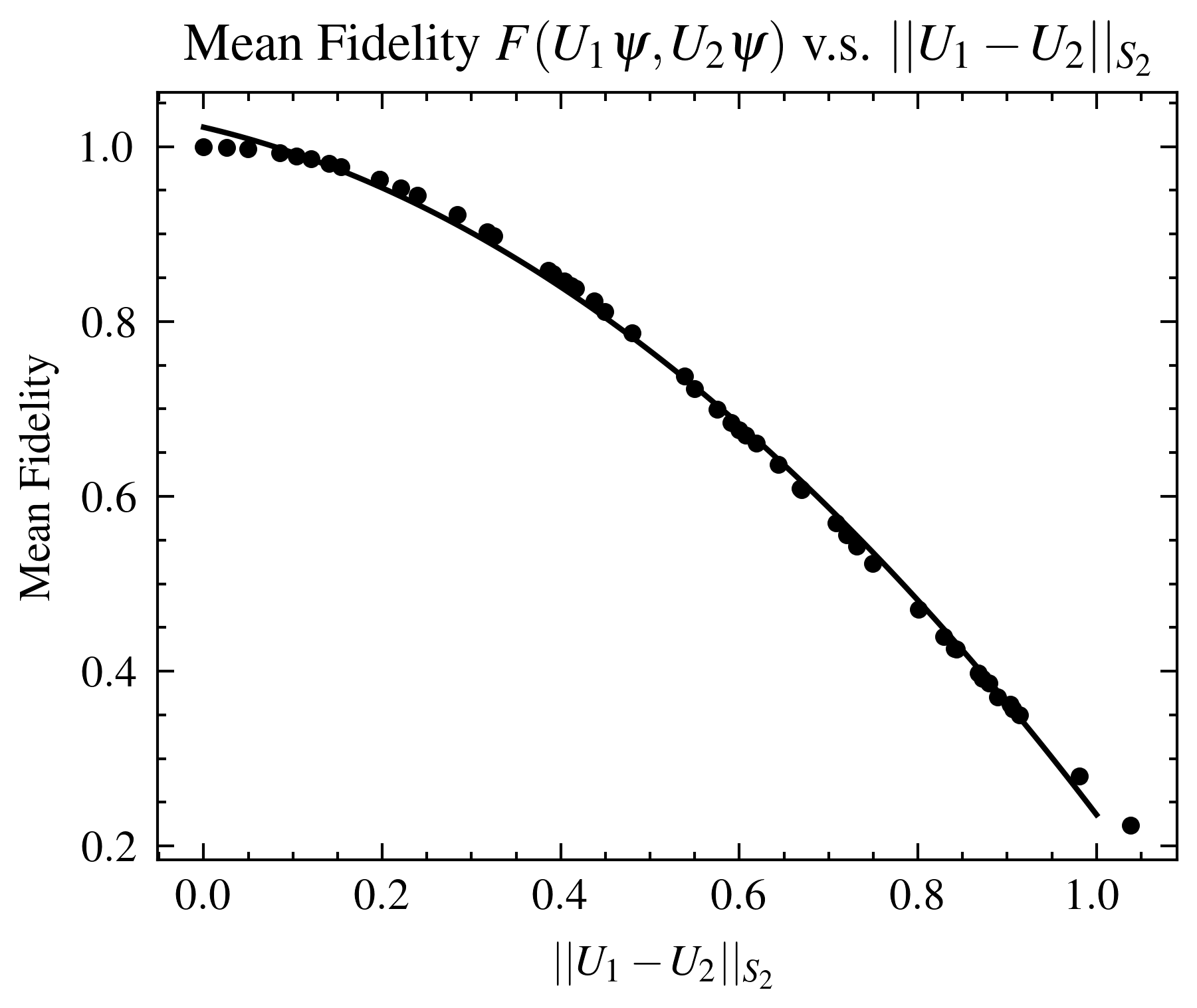}
    \caption{Empirical relation between approximated $\E_{\psi\sim \text{Uni}[\partial B_{\mathbb{C}^N}(0,1)]}\mathcal{F}(U_1\ket{\psi}, U_2\ket{\psi})$ and $\norm{U_1-U_2}_{S_2}$. For each pair of $U_1$ and $U_2$, the mean fidelity is computed over 1000 randomly sampled (with replacement) states $\psi\sim \text{Uni}[\partial B_{\mathbb{C}^N}(0,1)]$. A 6-qubit system is considered.}
    \label{fig-sim1}
\end{figure}
\begin{figure}[h]
    \centering
    \includegraphics[width=8cm]{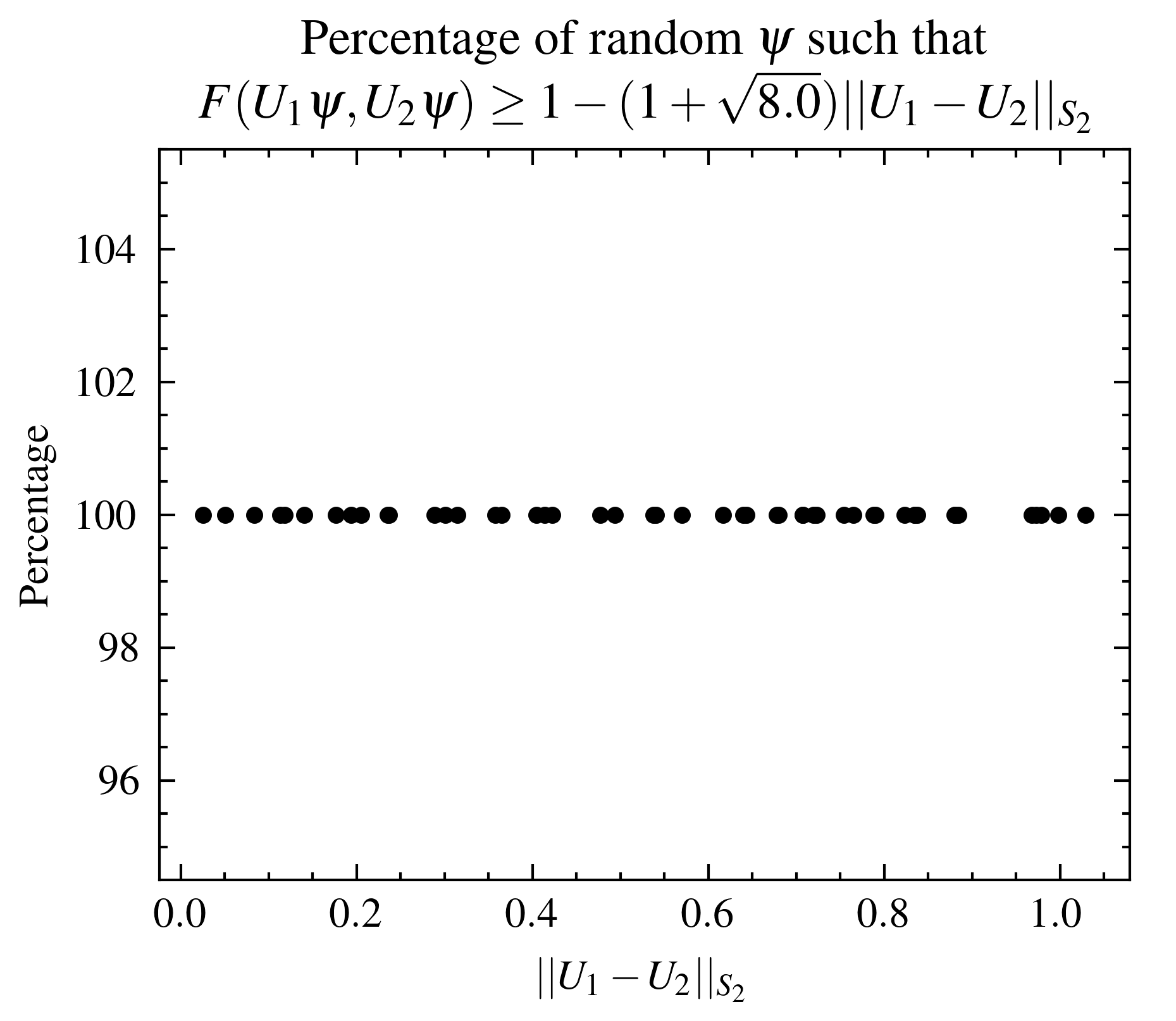}
    \caption{Verification of Lemma \ref{lemma3} when $\delta=0.2$. For each pair of $U_1$ and $U_2$, the percentage is computed over 1000 randomly sampled (with replacement) $\psi\sim \text{Uni}[\partial B_{\mathbb{C}^N}(0,1)]$. A 6-qubit system is considered.}
    \label{fig-sim2}
\end{figure}

{The lemma can be generalized to mixed quantum operations. Let $\mathcal{J}=\text{Uni}[\partial B_{\mathbb{C}^N}(0,1)]$, we define two mixed quantum operations $\tilde{U}_1, \tilde{U}_2$ to be pure-state $(\epsilon,\delta)$-similar if }
$$\mathbb{P}_{\ket{\psi}\sim \mathcal{J}}\Big(\mathcal{F}(\tilde{U}_1\ket{\psi}, \tilde{U}_2\ket{\psi})\geq \E_\psi ^2\frac{\bra{\psi}\tilde{U}_1\tilde{U}_1^\dagger + \tilde{U}_2\tilde{U}_2^\dagger\ket{\psi}}{2}-\epsilon\Big)\geq 1-\delta.$$
\begin{lemma}
\label{lemma5}
Let $\tilde{U}_1, \tilde{U}_2$ be two mixed quantum operations and $\mathcal{J}=\text{Uni}[\partial B_{\mathbb{C}^N}(0,1)]$. $\tilde{U}_1$, $\tilde{U}_2$ are pure-state $(\epsilon, \delta)$-similar if $\norm{\tilde{U}_1-\tilde{U}_2}_{S_2} \leq \sqrt{\frac{\epsilon^2-(1/\delta - 1)(\tau-\tau^4)}{2\tau(\epsilon +(1/\delta-1)\tau^2)}}$, where $\tau = \E_{\ket{\psi} \sim \mathcal{J}} \frac{\bra{\psi}\tilde{U}_1\tilde{U}_1^\dagger + \tilde{U}_2\tilde{U}_2^\dagger\ket{\psi}}{2}$.
\end{lemma}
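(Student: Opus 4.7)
\emph{Proof plan.} I follow the two-moment plus Chebyshev--Cantelli template of Lemma~\ref{lemma3}, adapted to the fact that for a mixed operation $\tilde{U}_i^\dagger\tilde{U}_i$ is no longer the identity, so the natural normalization constant is $\tau$ rather than $1$. The three steps are: (i) derive a lower bound on $\E[\mathcal{F}(\tilde{U}_1\ket{\psi},\tilde{U}_2\ket{\psi})]$ in terms of $\tau$ and $\hat{\epsilon}:=\norm{\tilde{U}_1-\tilde{U}_2}_{S_2}$, (ii) derive an upper bound on $\text{Var}[\mathcal{F}]$ with the correct $\tau$-dependence, and (iii) turn the resulting Chebyshev--Cantelli tail estimate into the quoted sufficient condition on $\hat{\epsilon}$.

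For the mean, I start from $\mathcal{F}=\abs{\bra{\psi}\tilde{U}_1^\dagger\tilde{U}_2\ket{\psi}}^2\geq(\Re\bra{\psi}\tilde{U}_1^\dagger\tilde{U}_2\ket{\psi})^2$ and use the polarization identity $(\tilde{U}_1-\tilde{U}_2)^\dagger(\tilde{U}_1-\tilde{U}_2)=\tilde{U}_1^\dagger\tilde{U}_1+\tilde{U}_2^\dagger\tilde{U}_2-(\tilde{U}_1^\dagger\tilde{U}_2+\tilde{U}_2^\dagger\tilde{U}_1)$ to write $\Re\bra{\psi}\tilde{U}_1^\dagger\tilde{U}_2\ket{\psi}=Y(\psi)-Z(\psi)$, where $Y=\tfrac{1}{2}\bra{\psi}\tilde{U}_1^\dagger\tilde{U}_1+\tilde{U}_2^\dagger\tilde{U}_2\ket{\psi}$ and $Z=\tfrac{1}{2}\bra{\psi}(\tilde{U}_1-\tilde{U}_2)^\dagger(\tilde{U}_1-\tilde{U}_2)\ket{\psi}$. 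Using $\Tr(A^\dagger A)=\Tr(AA^\dagger)$, the $\psi$-expectations give $\E[Y]=\tau$ and $\E[Z]=\hat{\epsilon}^2/2$, so by Jensen's inequality $\E[\mathcal{F}]\geq(\E[Y-Z])^2=(\tau-\hat{\epsilon}^2/2)^2\geq\tau^2-\tau\hat{\epsilon}^2$.

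For the variance, the crucial pointwise chain is $\mathcal{F}\leq\norm{\tilde{U}_1\ket{\psi}}^2\norm{\tilde{U}_2\ket{\psi}}^2\leq Y^2\leq Y\leq 1$, where the first inequality is Cauchy--Schwarz, the second is AM--GM applied to $(\norm{\tilde{U}_1\ket{\psi}}^2,\norm{\tilde{U}_2\ket{\psi}}^2)$, and the last two inequalities rely on the hypothesis $\norm{\tilde{U}_i\ket{\psi}}\leq\sum_\kappa\abs{\alpha_\kappa}\leq 1$. This chain yields $\mathcal{F}^2\leq\mathcal{F}\leq Y$, hence $\E[\mathcal{F}^2]\leq\E[Y]=\tau$, and combined with the mean bound, $\text{Var}[\mathcal{F}]=\E[\mathcal{F}^2]-(\E[\mathcal{F}])^2\leq\tau-(\tau^2-\tau\hat{\epsilon}^2)^2\leq(\tau-\tau^4)+2\tau^3\hat{\epsilon}^2$.

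Finally, feeding these into Chebyshev--Cantelli with deviation $k=\E[\mathcal{F}]-(\tau^2-\epsilon)\geq\epsilon-\tau\hat{\epsilon}^2$, the requirement $\sigma^2(1/\delta-1)\leq k^2$ becomes $[(\tau-\tau^4)+2\tau^3\hat{\epsilon}^2](1/\delta-1)\leq(\epsilon-\tau\hat{\epsilon}^2)^2$; dropping the non-negative $\tau^2\hat{\epsilon}^4$ term on the right and isolating $\hat{\epsilon}^2$ produces exactly the stated bound. The main obstacle is obtaining the $\tau-\tau^4$ (rather than a weaker $\tau^2-\tau^4$) coefficient in the variance estimate: the naive $\text{Var}[\mathcal{F}]\leq\E[\mathcal{F}](1-\E[\mathcal{F}])$ is not tight enough at $\tau=1$ to recover something comparable to Lemma~\ref{lemma3}, and the sharper coefficient requires combining the two independent pointwise estimates $\mathcal{F}\leq 1$ and $\mathcal{F}\leq Y$ into the chain $\mathcal{F}^2\leq\mathcal{F}\leq Y$.
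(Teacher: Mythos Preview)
Your proposal is correct and follows essentially the same route as the paper: a lower bound on $\E[\mathcal{F}]$ of the form $\tau^2-\tau\hat{\epsilon}^2$, an upper bound $\text{Var}[\mathcal{F}]\leq(\tau-\tau^4)+2\tau^3\hat{\epsilon}^2$, and then Chebyshev--Cantelli rearranged into the stated threshold on $\hat{\epsilon}$. The only differences are that you obtain the mean bound via Jensen on $(\E[Y-Z])^2$ rather than the paper's direct expansion, and you supply the pointwise chain $\mathcal{F}^2\leq\mathcal{F}\leq Y$ (via Cauchy--Schwarz, AM--GM, and $\sum_\kappa|\alpha_\kappa|\leq 1$) that justifies $\E[\mathcal{F}^2]\leq\tau$, a step the paper leaves implicit.
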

\begin{proof}
For any given pure state $\ket{\psi}\in \partial B_{\mathbb{C}^N}(0,1)$,
\begin{align*}
\mathcal{F}(\tilde{U}_1\ket{\psi}, \tilde{U}_2 \ket{\psi}) & \geq \Re^2\{\bra{\psi}\tilde{U}_1^\dagger \tilde{U}_2\ket{\psi}\}\\&=(\frac{\bra{\psi}\tilde{U}_1\tilde{U}_1^\dagger\ket{\psi} + \bra{\psi}\tilde{U}_2\tilde{U}_2^\dagger\ket{\psi}}{2}-\frac{\sum_{i=1}^N{\sigma_i}^2\abs{\bra{w_i}\ket{\psi}}^2}{2})^2,
\end{align*}
where $w_i$ and $\sigma_i$ are the {left-singular} vectors and singular values of $\tilde{U}_1-\tilde{U}_2$.
Let $\tau = \E_{\ket{\psi} \sim \mathcal{J}} \frac{\bra{\psi}\tilde{U}_1\tilde{U}_1^\dagger\ket{\psi} + \bra{\psi}\tilde{U}_2\tilde{U}_2^\dagger\ket{\psi}}{2}$ and $\hat{\epsilon} = \norm{\tilde{U}_1-\tilde{U}_2}_{S_2}$,
$$ \E_{\ket{\psi} \sim \mathcal{J}} \mathcal{F}(\tilde{U}_1\ket{\psi}, \tilde{U}_2 \ket{\psi})
\geq \E_{\ket{\psi}} \Big(\frac{\bra{\psi}\tilde{U}_1\tilde{U}_1^\dagger\ket{\psi} + \bra{\psi}\tilde{U}_2\tilde{U}_2^\dagger\ket{\psi}}{2}\Big)^2-\tau\hat{\epsilon}^2 \geq \tau^2-\tau\hat{\epsilon}^2.$$
\begin{align*}
\text{Var}_{\ket{\psi} \sim \mathcal{J}}[\mathcal{F}(\tilde{U}_1\ket{\psi}, \tilde{U}_2\ket{\psi})] &=\E_{\ket{\psi}} \mathcal{F}^2(\tilde{U}_1\ket{\psi}, \tilde{U}_2\ket{\psi}) - \big(\E_{\ket{\psi} } \mathcal{F}(\tilde{U}_1\ket{\psi}, \tilde{U}_2\ket{\psi})\big)^2\\&\leq \tau - (\tau^2-\tau\hat{\epsilon}^2)^2 \leq \tau-\tau^4+2\tau^3\hat{\epsilon}^2.
\end{align*}
Applying the Chebyshev-Cantelli inequality, for arbitrary $c>0$, 
$$\mathbb{P}_{\ket{\psi} \sim \mathcal{J}}\Big(\mathcal{F}(\tilde{U}_1\ket{\psi}, \tilde{U}_2\ket{\psi}) \geq \tau^2-c\Big)\geq 1-\frac{\tau-\tau^4+2\tau^3\hat{\epsilon}^2}{(c-\tau\hat{\epsilon}^2)^2+\tau-\tau^4+2\tau^3\hat{\epsilon}^2}.$$
Setting $c=\epsilon$ and $\frac{\tau-\tau^4+2\tau^3\hat{\epsilon}^2}{(c-\tau\hat{\epsilon}^2)^2+\tau-\tau^4+2\tau^3\hat{\epsilon}^2} = \delta$, it suffices to have $\hat{\epsilon}\leq \sqrt{\frac{\epsilon^2-(1/\delta - 1)(\tau-\tau^4)}{2\tau(\epsilon +(1/\delta-1)\tau^2)}}$.
\end{proof}

{Unlike for unitary quantum operations, the error bound for mixed quantum operations depends on $\tau$, which incorporates the difference of $\tilde{U}_1$ and $\tilde{U}_2$ in "size." Combining Lemma \ref{lemma3} and Theorem \ref{theorem3}, we obtain the following theorem.}
\begin{theorem}
\label{theorem4} Consider arbitrary $\epsilon, \delta, \hat{\delta}>0$ and input unitary quantum operations ${U}_1$, ${U}_2$. 
Consider $m$ independent samples $\theta_1, ..., \theta_m \sim \text{Uni}[-\pi, \pi]$. Let $S(\theta_i)$ be as defined by (\ref{Sdef}) and the corresponding quantum approximations be as defined by (\ref{quantum-s2}). Then $\mathbb{P}(U_1,U_2 \:\text{are}\: (\epsilon, \delta)\text{-similar})\geq 1-\hat{\delta}$ if the following inequality holds.
\begin{align*}
 &\widehat{{\norm{{U}_1-U_2}}}_{S_2} + \min \{\sqrt[4]{{\frac{2\log(2/\hat{\delta})}{m}}},\sqrt{\frac{2\log(2/\hat{\delta})}{m}}\norm{U_1-U_2}^{-1}_{S_2}\}
 \leq \frac{\epsilon}{1+\sqrt{2(1/\delta - 1)}}.
\end{align*} 
\end{theorem}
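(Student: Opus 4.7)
The plan is to chain Theorem~\ref{theorem3} with Lemma~\ref{lemma3}: the former controls, with high probability, the gap between the quantum estimator $\widehat{\norm{U_1-U_2}}_{S_2}$ and the true normalized Schatten $2$-norm $\norm{U_1-U_2}_{S_2}$, while the latter converts a sufficiently small value of $\norm{U_1-U_2}_{S_2}$ into the pure-state $(\epsilon,\delta)$-similarity certificate. Since all randomness enters through the $m$ i.i.d.\ samples $\theta_1,\dots,\theta_m\sim\text{Uni}[-\pi,\pi]$, a single Chernoff--Hoeffding event of probability at least $1-\hat\delta$ will suffice to drive the whole argument.

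The first step is to apply Theorem~\ref{theorem3} to the mixed operation $\tilde U = U_1 - U_2$ (after a harmless rescaling to satisfy $\sum_\kappa\abs{\alpha_\kappa}\le 1$ and absorbing the factor into the implicit constant), which states that $|\widehat{\norm{U_1-U_2}}_{S_2}-\norm{U_1-U_2}_{S_2}|<\hat\epsilon$ with probability at least $1-\hat\delta$ whenever $m=O\!\left(\frac{\log(2/\hat\delta)}{2\hat\epsilon^{2}}\min\{\hat\epsilon^{-2},\norm{U_1-U_2}_{S_2}^{-2}\}\right)$. The second step is to invert this relation, i.e.\ to determine the smallest error $\hat\epsilon(m)$ guaranteed for a given $m$. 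Solving $2\hat\epsilon^{2}\max\{\hat\epsilon^{2},\norm{U_1-U_2}_{S_2}^{2}\}=\log(2/\hat\delta)/m$ in the two natural regimes yields the two candidate upper bounds $\sqrt[4]{2\log(2/\hat\delta)/m}$ (in the regime $\hat\epsilon\ge\norm{U_1-U_2}_{S_2}$) and $\sqrt{2\log(2/\hat\delta)/m}\cdot\norm{U_1-U_2}_{S_2}^{-1}$ (in the regime $\hat\epsilon<\norm{U_1-U_2}_{S_2}$). A short case check confirms that each expression is in fact a valid upper bound in the \emph{other} regime as well, so their minimum bounds $\hat\epsilon$ simultaneously, without having to know which regime $\norm{U_1-U_2}_{S_2}$ actually occupies.

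The third step is one application of the triangle inequality: on the $(1-\hat\delta)$-event of successful approximation,
\[
\norm{U_1-U_2}_{S_2} \le \widehat{\norm{U_1-U_2}}_{S_2} + \min\!\left\{\sqrt[4]{\tfrac{2\log(2/\hat\delta)}{m}},\ \sqrt{\tfrac{2\log(2/\hat\delta)}{m}}\,\norm{U_1-U_2}_{S_2}^{-1}\right\}.
\]
By assumption the right-hand side is bounded by $\epsilon/(1+\sqrt{2(1/\delta-1)})$, and Lemma~\ref{lemma3} then asserts that $U_1$ and $U_2$ are pure-state $(\epsilon,\delta)$-similar. Hence this similarity conclusion holds with probability at least $1-\hat\delta$, as claimed.

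The main technical obstacle is the second step's inversion, because the resulting bound still refers to the unknown quantity $\norm{U_1-U_2}_{S_2}$ through the second term inside the minimum. The cleanest resolution is the case analysis sketched above: in either regime the other term is also an upper bound for the true estimator error, so taking the minimum is safe regardless of the actual value of $\norm{U_1-U_2}_{S_2}$. Once this is settled, the remainder is a mechanical concatenation of Theorem~\ref{theorem3} and Lemma~\ref{lemma3}; the $O(1)$ measurement-complexity result established in Lemma~\ref{lem-mea} further guarantees that the per-sample Hadamard measurements contribute only negligible extra uncertainty, consistent with the error-free measurement assumption adopted earlier in Section~\ref{linear-comb}.
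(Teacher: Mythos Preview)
Your proposal is correct and follows essentially the same route as the paper: invert the sample-complexity bound of Theorem~\ref{theorem3} to obtain, with probability at least $1-\hat\delta$, an upper bound on $\bigl|\widehat{\norm{U_1-U_2}}_{S_2}-\norm{U_1-U_2}_{S_2}\bigr|$ of the form $\min\{\sqrt[4]{2\log(2/\hat\delta)/m},\,\sqrt{2\log(2/\hat\delta)/m}\,\norm{U_1-U_2}_{S_2}^{-1}\}$, apply the triangle inequality, and then invoke Lemma~\ref{lemma3}. If anything, your treatment of the inversion step is more explicit than the paper's---the paper simply asserts the inverted bound without working through the two regimes---so no gap is present.
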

\begin{proof}
Following from Lemma \ref{lemma3}, it suffices to show that $\norm{U_1-U_2}_{S_2}\leq \frac{\epsilon}{1+\sqrt{2(1/\delta - 1)}}$ with probability at least $1-\hat{\delta}$. By Theorem \ref{theorem3}, with $m$ samples, \begin{align*}
    &\mathbb{P}\Big(\frac{\abs{\widehat{{\norm{{U}_1-U_2}}}_{S_2}-\norm{U_1-U_2}_{S_2}}}{\sqrt{2}}\leq  
    \min \{\sqrt[4]{{\frac{\log(2/\hat{\delta})}{2m}}},\sqrt{\frac{\log(2/\hat{\delta})}{2m}}\sqrt{2}\norm{U_1-U_2}^{-1}_{S_2}\} \Big)
    \\ &\geq 1-\hat{\delta}.
\end{align*}
After simplifying and loosening the bound a little, we obtain
\begin{align*}
&\mathbb{P}\Big({{\norm{U_1-U_2}_{S_2}}}\leq \widehat{{\norm{{U}_1-U_2}}}_{S_2}+ \min \{\sqrt[4]{{\frac{2\log(2/\hat{\delta})}{m}}},\sqrt{\frac{2\log(2/\hat{\delta})}{m}}\norm{U_1-U_2}^{-1}_{S_2}\}\Big)\\&\geq 1-\hat{\delta}.
\end{align*}
The main claim follows.
\end{proof}

\section{Applications to Quantum Circuit Learning}

\label{learning}
Quantum circuit learning is one of the most natural applications of the similarity metric. A problem setting is  as follows: {given a target unitary quantum operation $V$,   represented via its clean-qubit controlled circuit},  find a  parameter set $\hat{\xi}$ of a variational circuit $U(\xi)$ that best approximates $V$. {Theorem \ref{theorem4} inspires a circuit learning algorithm whose cost function, $\widehat{{\norm{{U(\xi)}-V}}}_{S_2}^2$, utilizes the normalized Schatten 2-norm of the difference between $U(\xi)$ and $V$ (see Algorithm \ref{algorithm}). We increase the similarity between $U(\xi)$ and $V$ by minimizing the cost function.}
\begin{algorithm}[ht]
\caption{Sample based quantum circuit learning}\label{algorithm}
\textbf{Input}: Target unitary $V$. \\
\textbf{Output}: $\mathbf{\xi}$ and $ U(\mathbf{\xi})$ which approximates $V$.
\begin{algorithmic}[1]
\State $\Theta=\emptyset$ \Comment{$\Theta$ saves the random samples}
\For{$i: 1\to m$} \Comment{Generating $m$ samples}
\State $\theta_i\sim \text{Uniform}[-\pi, \pi]$ \Comment{$\mathcal{D}=\text{Uniform}[-\pi, \pi]$}
\State $\Theta=\Theta\cup \theta_i$
\EndFor
\State Randomly initialize $\mathbf{\xi}$.

\Loop
	\State $f(\mathbf{\xi}) = 2 - \frac{1}{m}\sum_{i=1}^m2\Re\{\bra{\mathbf{0}}S^\dagger (\theta_i)V^\dagger U(\mathbf{\xi})S(\theta_i)\ket{\mathbf{0}}\}$ \Comment{Objective}
	\State $\mathbf{\xi} = \mathbf{\xi} - \eta \nabla_{\mathbf{\xi}} f(\mathbf{\xi})$ \Comment{Gradient Descent}
\EndLoop
\end{algorithmic}
\end{algorithm}
A quantum circuit diagram for approximating $\Re\{\bra{\mathbf{0}}S^\dagger (\theta_i)V^\dagger U(\mathbf{\xi})S(\theta_i)\ket{\mathbf{0}}\}$ is shown in Table \ref{learningckt}.
\begin{table}[ht]
\[\Qcircuit @C=1em  @R=1em {
&\lstick{\ket{0}} & \gate{H} &\qw&\ctrl{1}&\ctrl{1}&\qw&\gate{H}&\qw&  \meter\\
&\lstick{\ket{\boldsymbol{0}}} & {/}\qw&\gate{S(\theta)}&\gate{U(\mathbf{\xi})}&\gate{V^\dagger}&\qw
&\qw&\qw
}\]
\caption{Circuit for $3$ qubits: approximating $\Re\{\bra{\mathbf{0}}S^\dagger (\theta_i)V^\dagger U(\mathbf{\xi})S(\theta_i)\ket{\mathbf{0}}\}$}\label{learningckt}
\end{table}
To obtain an estimate of the gradient with respect to $\xi_i$, one could apply a black-box gradient approximation \cite{https://doi.org/10.48550/arxiv.1909.05074,Stokes_2020,Harrow_2021,Mitarai_2018,parameter_shift}. 
\begin{align*}
&\frac{\partial f}{\partial \xi_i} \approx_{\varepsilon\to 0} \frac{f(\xi_i+\varepsilon)-f(\xi_i-\varepsilon)}{2\varepsilon}.
\end{align*}
A better accuracy of the approximation can be achieved by increasing $m$, which is consistent with Theorem \ref{theorem4}. One possible application of the algorithm is learning the square root of a quantum operation $V$, where we use $U(\xi)U(\xi)$ to approximate $V$. 

\section{Concluding Remarks}
In summary, we defined and introduced the normalized Schatten norms and a set of similarity metrics between quantum operations that can be efficiently estimated. We discussed sufficient and necessary conditions for a sampling circuit to estimate the normalized Schatten 2-norm and showed one optimal design of such sampling circuits. We then studied the sample complexity required by the sampling circuit and obtained an upper bound that is polynomial to $\frac{1}{\epsilon}$. With such an efficient sampling method, we were able to estimate the normalized Schatten 2-norms of mixed quantum operations.  We next related similarity of quantum operations based on the normalized Schatten 2-norm to a similarity metric induced by the traditional fidelity metric used for quantum states. Finally, we showed how such a connection could lead to a design of the loss function for circuit learning for tasks such as approximating a given quantum circuit or its square root.

In this paper we emphasized circuit learning applications to the problem of approximating unitary operations. We did not explore learning of mixed operations. A similar circuit learning approach would apply to mixed quantum operations with a corresponding modification of the loss term at the line 8 of Algorithm \ref{algorithm}. However, as noted in Lemma \ref{lemma5}, the error bound posed on the normalized Schatten 2-norm of the difference of the mixed operations is also a function of $\tau =\E_{\ket{\psi} \sim \mathcal{J}} \frac{\bra{\psi}\tilde{U}_1\tilde{U}_1^\dagger + \tilde{U}_2\tilde{U}_2^\dagger\ket{\psi}}{2}$. When $\tau$ is not close 1, there's a weaker correlation between the normalized Schatten 2-norm of the difference and the fidelity-based similarity metric. 
\bibliography{sn-bibliography}
\newpage
\appendix
\section{Theorem \ref{theorem2} Supplement}
\label{appendixthm2}
\begin{align*}
&\mathbb{P}\bigl( \biggl\vert \widehat{\frac{\Tr(AA^\dagger)}{N}} - \norm{A}^2_{S_2} \biggr\vert < \epsilon \max\{\epsilon, \norm{A}_{S_2}\}\bigr) > 1-\delta 
\\
&\implies\\
&\mathbb{P}\Big(\abs{\widehat{\norm{A}}_{S_2} - \norm{A}_{S_2}}< \epsilon \Big) > 1-\delta.
\end{align*}
\begin{proof}
Based on our classical approximation algorithm defined by (\ref{classical_s2}), 
\begin{align*}
\widehat{\norm{A}}_{S_2}=\sqrt{\widehat{\frac{\Tr(AA^\dagger)}{N}}}.
\end{align*} Let $\mathcal{M} =\widehat{\frac{\Tr(AA^\dagger)}{N}}= \widehat{\norm{A}}_{S_2}^2$, we divide the proof into two cases.\\
If $\epsilon < \norm{A}_{S_2}$,
\begin{align*}
\abs{\mathcal{M}-{\norm{A}^2_{S_2}}}\leq \epsilon\norm{A}_{S_2} &\implies {\norm{A}^2_{S_2}}-\epsilon\norm{A}_{S_2}\leq \mathcal{M}\leq {\norm{A}^2_{S_2}}+\epsilon\norm{A}_{S_2}\\
&\implies {\norm{A}^2_{S_2}}-2\epsilon\norm{A}_{S_2}+\epsilon^2\leq {\mathcal{M}}\leq {\norm{A}^2_{S_2}}+2\epsilon\norm{A}_{S_2}+\epsilon^2\\
&\implies \big(\norm{A}_{S_2}-\epsilon\big)^2\leq {\mathcal{M}}\leq\big(\norm{A}_{S_2}+\epsilon\big)^2\\
&\implies \abs{\sqrt{\mathcal{M}}-\norm{A}_{S_2}} \leq \epsilon.
\end{align*}
If $\epsilon \geq \norm{A}_{S_2}$,
\begin{align*}
\abs{{\mathcal{M}}-{\norm{A}^2_{S_2}}} \leq \epsilon^2 &\implies {\norm{A}^2_{S_2}}-\epsilon^2\leq \mathcal{M}\leq {\norm{A}^2_{S_2}}+\epsilon^2\\
&\implies 0\leq \mathcal{M}\leq {\norm{A}^2_{S_2}}+2\epsilon\norm{A}_{S_2}+\epsilon^2\\
&\implies 0\leq \mathcal{M}\leq\big(\norm{A}_{S_2}+\epsilon\big)^2\\
&\implies \abs{\sqrt{\mathcal{M}}-\norm{A}_{S_2}} \leq \epsilon.
\end{align*}
\end{proof}
\end{document}